\newcommand{\tT}{\tilde{T}}
\newcommand{\cF}{\mathcal{F}}
\newcommand{\X}{\mathbb{X}}
\newcommand{\cX}{\mathcal{X}}
\newcommand{\Z}{\mathbb{Z}}
\newcommand{\N}{\mathbb{N}}
\newcommand{\defeq}{:=}
\newcommand{\ind}{\mathbb{I}}
\newcommand{\SPR}{\sigma}
\newcommand{\Y}{\mathbb{Y}}
\newcommand{\ch}{\mathrm{ch}}
\newcommand{\pa}{\mathrm{pa}}
\newcommand{\canon}{\mathbf{C}}
\newcommand{\ts}{\mathbf{t}}
\newcommand{\cN}{\mathcal{N}}
\renewcommand{\SPR}{\mathrm{SPR}}
\newcommand{\bprune}{u}
\newcommand{\bregrf}{v}
\newcommand{\tprune}{r}
\newcommand{\tregrf}{w}
\newcommand{\indistsh}[1]{\xrightarrow[\phantom{iii}]{{\mathrm{d}}}}
\newcommand{\almostsurelyOneArgSh}[1]{\xrightarrow[\phantom{iii}]{\mathrm{a.s.}}}
\algnewcommand\Break{\textbf{break}}
\algnewcommand\Continue{\textbf{continue}}
\definecolor{lightgray}{rgb}{0.85, 0.85, 0.85}
\definecolor{myblack}{RGB}{2,2,2}
\definecolor{myred}{RGB}{244,42,54}
\definecolor{mygreen}{RGB}{1,140,81}
\definecolor{myblue}{RGB}{19,92,163}
\definecolor{myyellow}{RGB}{250,124,57}
\newcommand{\maxsite}{S}
\newcommand{\ud}{\mathrm{d}}
\newcommand\ls{{\mathcal L}}
\newtheorem{theorem}{Theorem}
\newtheorem{lemma}{Lemma}
\newcommand{\sprop}[4]{(\mathrm{#1},\mathrm{#2},\mathrm{#3},\mathrm{#4})}
\newcommand{\cY}{\mathcal{Y}}
\newcommand{\cZ}{\mathcal{Z}}
\newcommand{\ns}{m_{\mathrm{S}}}
\newcommand{\jmax}{J}
\newcommand{\cP}{\mathcal{P}}
\begin{document}

\title{Bridging trees for posterior inference on Ancestral Recombination Graphs}%[Posterior Sampling for ARG]

% Use letters for affiliations, numbers to show equal authorship (if applicable) and to indicate the corresponding author
%\author{
%Kari Heine\footnote{Department of Mathematical Sciences, University of Bath, Claverton Down, Bath, BA2 7AY, UK},~
%Alexandros Beskos\footnote{Department of Statistical Science, University College London, Gower Street, London ,WC1E 6BT, UK},~
%Ajay Jasra\footnote{Department of Statistics and Applied Probability, National University of Singapore, 6 Science Drive 2, Singapore, 117546, SG},~
%David Balding\footnote{Centre for Systems Genomics, School of BioSciences and School of Mathematics \& Statistics, University of Melbourne, Vic 3010, Australia},~
%Maria De Iorio\footnote{Department of Statistical Science, University College London, Gower Street, London ,WC1E 6BT, UK}} 
%\date{}

\author{%%%% Author details
K. Heine\footnote{Corresponding author, email:k.m.p.heine@bath.ac.uk, Department of Mathematical Sciences, University of Bath, Claverton Down, Bath, BA2 7AY, UK}, A. Beskos\footnote{Department of Statistical Science, University College London, Gower Street, London, WC1E 6BT, UK}, A. Jasra\footnote{Department of Statistics and Applied Probability, National University of Singapore, 6 Science Drive 2, Singapore, 117546, SG}, D. Balding\footnote{Centre for Systems Genomics, School of BioSciences and School of Mathematics \& Statistics, University of Melbourne, Vic 3010, Australia} ~and M. De Iorio\footnote{Department of Statistical Science, University College London, Gower Street, London, WC1E 6BT, UK, and Yale-NUS College, 16 College Avenue West, Singapore, 138527, SG}}

%%%%%%%%% Insert author address here
%\footnote{$^{1}$Department of Mathematical Sciences, University of Bath, Claverton Down, Bath, BA2 7AY, UK\\
%$^{2}$\\
%$^{3}$Department of Statistics and Applied Probability, National University of Singapore, 6 Science Drive 2, Singapore, 117546, SG\\
%$^{4}$Centre for Systems Genomics, School of BioSciences and School of Mathematics \& Statistics, University of Melbourne, Vic 3010, Australia\\
%$^{5}$}

\date{}
%\keywords{}

%\corres{Kari Heine\\
%\email{k.m.p.heine@bath.ac.uk}}

%\begin{fmtext}
%\end{fmtext}

\maketitle

\begin{abstract}
We present a new Markov chain Monte Carlo algorithm, implemented in software Arbores, for inferring the history of a sample of DNA sequences. Our principal innovation is a bridging procedure, previously applied only for simple stochastic processes, in which the local computations within a bridge can proceed independently of the rest of the DNA sequence, facilitating large-scale parallelisation.
\end{abstract}

\textbf{Keywords:} Ancestral Recombination Graph, Markov chain Monte Carlo, Bayesian inference, coalescent, tree scanning
% --------------------------------------------------------------------
% --------------------------------------------------------------------
% INTRODUCTION
% --------------------------------------------------------------------
% --------------------------------------------------------------------
\section{Introduction}

A central problem in population genetics is to infer 
genealogical histories over a set of homologous DNA sequences, including e.g.\@ shared lineages of genome segments back to their most recent common ancestors (MRCAs), or  recombination, mutation and other genomic events that underlie the observed sequence variation. A Bayesian approach
 judiciously  combines structured probability models for the evolution 
of genealogies backwards in time (i.e.\@ prior probabilities) with information in the DNA sequences to provide posterior probabilities of lineages. It also gives rise to the highly complex computational challenge of probing the deduced posterior distributions.
We make considerable progress towards achieving this goal
by developing a new Markov chain Monte Carlo (MCMC) algorithm, and accompanying software Arbores. Our methodology adopts 
a data augmentation direction and
 implements a bridging procedure to impute the latent sequence of genealogical trees between given trees at two genome sites.

A prominent probabilistic evolutionary model for DNA sequences is the \emph{coalescent with recombination} --- an extension of the single-locus (Kingman's) coalescent~\cite{kingman82b,kingman82a}. The \emph{ancestral recombination graph} (ARG)~\cite{griffiths_et_marjoram97,grif:96} is a graphical representation of this evolutionary model. Essentially, ARG is graph representation of gene genealogies and the centre of our attention in this work.
ARG is central in population genetics and, more generally, in biology, and describes the relationship between sequences undergoing recombination. Recombination is one of the most important evolutionary forces as it increases genetic diversity and promotes adaptation through exchange of genetic material~\cite{arenas13}. Conscientious modelling and 
learning of recombination is fundamental to gaining a better understanding of many biological processes, e.g.~genome structure \cite{arenas13}, phenotypic diversity \cite{zhang_et_al02}, or mapping of disease genes. 
See \cite{arenas13} for a review of possible applications of ARG-based models.
%
%The recombination evolutionary history is usually represented by the ARG. 
%
In Bayesian inference, ARG is necessary for determining the likelihood function of a sample of observed chromosomes in a population genetic model, and, as such, for parameter estimation and hypothesis testing.  Unfortunately, due to the complexity of the ARG representation and the computational difficulties associated to inferring an ARG given a sample of chromosomes, ARGs have not been widely used and inference is often based on summary statistics and univariate techniques, with inevitable loss of information.

More formally, a  coalescent with recombination model can be thought of as specifying a prior law for an ARG. 
We follow the original definition and treat the ARG as a random graph, whereas some authors use this term to refer to a fixed graph, usually the one representing the true but unknown underlying genealogy or an estimate of it.
The inference from an observed set of DNA sequences can then, in theory, proceed by deriving the corresponding posterior ARG distribution. 
We adopt a discrete approximation to the infinite-sites mutation model, which implies that one of at most two alleles can be found at any given genomic site (in practice, triallelic DNA sites exist but are rare). It follows that an observed set of $N$ homologous DNA sequences, each comprised of $\maxsite$ genomic sites,  can be represented as a binary matrix $D$ with $D_{i,j} \in \{0,1\}$ for $1\leq i\leq N$, $1\leq j\leq\maxsite$.  
%
 %
 %AWe similarly assume at most one recombination immediately to the right of any site since the MRCA at that site.
%
%
%
%Arenas M. The importance and application of the ancestral recombination graph. Frontiers in Genetics. 2013;4:206. doi:10.3389/fgene.2013.00206.
%
%Reich D. E., Cargill M., Bolk S., Ireland J., Sabeti P. C., Richter D. J., et al. (2001). Linkage disequilibrium in the human genome. Nature 411, 199"1¤74 10.1038/35075590 
%
%Zhang Y. X., Perry K., Vinci V. A., Powell K., Stemmer W. P., del Cardayre S. B. (2002). Genome shuffling leads to rapid phenotypic improvement in bacteria. Nature 415, 644"1¤76 10.1038/415644a
Inferring the ARG conditional on the data $D$ is notoriously difficult  and has been a renowned challenge in computational genetics. Firstly, the posterior distribution is highly concentrated compared to the prior, so that elementary approaches based on importance or rejection sampling from the prior distribution are unacceptably inefficient. Secondly, the ARG is defined on a large and complex space involving a high-dimensional product of continuous and finite spaces, such that an exhaustive iteration even over the finite spaces is infeasible. Standard MCMC methods can nowadays be usefully applied in some high dimensional models (e.g.\@ large hierarchical models, or inverse problems \cite{arenas13}), but small changes in the random variables that specify an ARG can give substantially different graphs (i.e.\@ the likelihood surface is highly discontinuous), so that standard random-walk type MCMC methods are very difficult to implement for ARG inference.

%\subsection{Literature}

Earlier attempts to address this sampling  task have involved mainly importance sampling \cite{grif:96, fearnhead_et_donnelly01} or MCMC \cite{nielsen00,kuhner00}. In \cite{nielsen00}, the primary motivation is the estimation of the `global' recombination rate, rather than the full ARG, enabling the use of less informative observations which makes the problem substantially simpler than the one considered here. The method of \cite{kuhner00} is based on proposing a replacement for a part of the ARG in proportion to its prior probability. These % consensus is now that such
methods scale badly with $S$ and $N$. More recently, \cite{mcvean_et_cardin05} introduced the sequentially-Markov coalescent (SMC) approximation of the ARG which assumes that the sequence of genealogical trees at genome sites forms a Markov process. The authors demonstrated that the approximation is accurate over a wide range of recombination rates.
The SMC, coupled with the observed sequences, can be viewed as a hidden Markov model, with coalescent trees as hidden states, from which the data $D$ are generated under the mutation model. The most recent advance in this field, \cite{rasmussen_et_al14},
exploits the Markovian structure of the SMC approximation and proceeds by re-simulating the sequence of coalescent subtrees that correspond to a specific subset of observed haplotypes. With some resemblance to the method of \cite{kuhner00}, under the Markovian assumption, the resulting algorithm is a Gibbs sampler capable of performing inference on a moderately sized sample of sequences representing the complete genome.

\subsection{Overview of the proposed method}

A key contribution of our methodology is to reduce the computational complexity of the MCMC algorithm by dividing the inference task into a large number of subtasks, amenable to parallel computations. For each pre-specified genome segment, the coalescent trees at the initial and terminal sites are fixed, while a new sequence of trees is proposed at the intervening sites, compatible with the data $D$ and the terminal trees. The proposed sequence is accepted or rejected according to the Metropolis-Hastings (MH) acceptance test. Due to the Markovian structure of the SMC process, these calculations require no information from outside the chosen genome segment, see Figure \ref{fig:plot_bridge}. This is crucial to the algorithm's ability to localise the computations and control the combinatorial complexity. 

\begin{figure}
\begin{center}
\includegraphics[width=.8\textwidth]{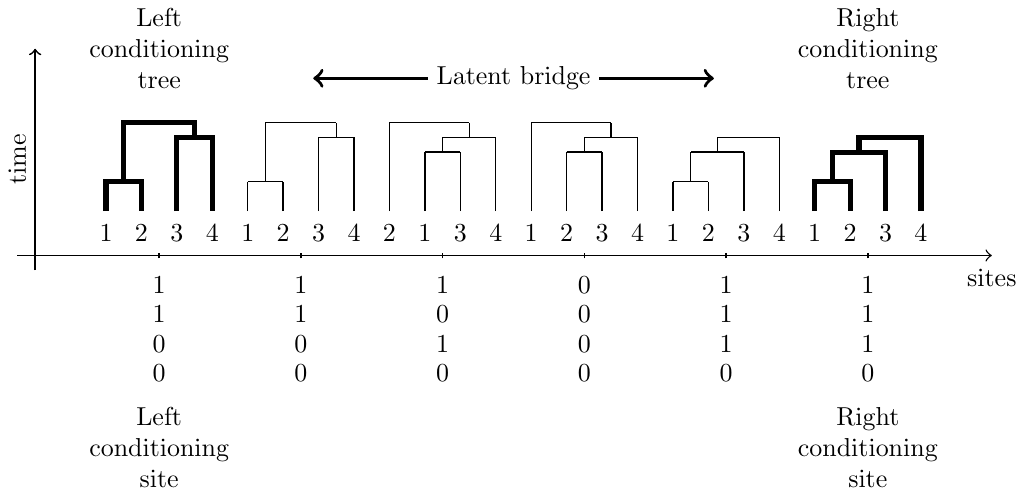}
\end{center}
\caption{The data augmentation scheme for the SMC model. Given left and right conditioning trees our method samples an intervening bridge of coalescent trees, 
which is consistent with the relevant observed sites. In this example, the trees are comprised of 7 nodes (4 of which, at the bottom of the graph, are the leaves).}
\label{fig:plot_bridge}
\end{figure}

This approach is similar in principle to bridging methods for discretely-observed diffusion processes (see e.g.~\cite{robert:01}) or Markov jump-processes (see e.g.~\cite{boys:08}). In recognition of this conceptual similarity, we refer to the proposed sequence of coalescent trees over a genome segment as a \emph{bridge}, and to our algorithm as the \emph{tree-bridging MCMC sampler}.  We are not aware of previous uses of bridging in such a complex state space. The major challenge for our algorithm involves developing an efficient way of generating bridges, which requires three main steps:
\begin{description}
\item[1. Tree Scanning:] We construct a bridge by a method inspired by the \emph{tree scanning} algorithm \cite{song_et_hein03,song_et_hein05}, \cite[page 328]{gusfield14}. Starting with the left conditioning tree (see Figure \ref{fig:plot_bridge}), the tree at each subsequent site is either the same, corresponding to an identity operation, or a subtree is cut off ({\em pruned}) and then reattached ({\em regrafted}) onto another branch, which is called a subtree-prune-and-regraft (SPR) operation \cite{song_et_hein03,song03,song06}. We iterate exhaustively over all such operations at each site, until all possible paths of coalescent trees have been generated up to the right conditioning site.

Tree scanning delivers paths of trees, each endowed with a total time ordering of the nodes (i.e.~the coalescences or branch merges), but the exact times associated with the nodes are not yet fixed. Moreover, most tree paths will typically not match the right terminal condition. 

\item[2. Time Adjustment:] Node times at the leftmost tree are fixed by the conditioning. Each non-identity SPR operation within a tree path introduces a new node. For each path, we check whether the times of the new nodes can be set so that the times in the right terminal tree are realised. If not, the path is discarded, otherwise the times of the nodes that are present in the right conditioning trees are fixed to their value in that conditioning tree.

\item[3. Sampling:] The last step is the generation of the complete bridge with all nodes affixed (i.e.\@ their times determined). We choose a coalescent tree path at random from the set resulting after Step 2. For this path, we generate times for each node not present in either terminal tree, from the uniform or exponential distribution (see later sections for details) in the permitted interval that respects the tree structures and node orderings.
\end{description}

\noindent The process is repeated for a number of bridges spanning the genome region, and overlapping so that each genome site is non-terminal in at least one bridge, thus is able to vary.

The remainder of this paper is organised as follows. Section \ref{sec:HMM} reviews the Markovian dynamics of the SMC model, and the corresponding likelihood of the observed DNA sequences. Section \ref{sec:Bridge sampler} presents the MCMC algorithm for sampling from the  ARG posterior distribution. Section \ref{sec:enabling heuristics} shows some heuristics for reducing the computational cost of the MCMC proposal. Section \ref{sec:numeric} gives numerical results from an example application of the new algorithm. We conclude with some remarks in Section \ref{sec:conclude}.
 
\section{Hidden Markov model for ARG inference}
\label{sec:HMM}

The adoption of the SMC approximation for the law of the ARG enables us to formulate the inference problem in the hidden Markov model framework as described in this section. \\ 

%xxx We further assume that mutations occur uniformly along the branches of the coalescent tree, independently at each genome site~\cite{kimura69,hein_et_al05}  xxx

\subsection{Hidden tree process}

A realisation of the SMC process is a sequence (or path) of coalescent trees $T=(T_i)_{1\le i\le\maxsite}$, where each $T_i$ is a rooted binary tree. A tree branch is a directed edge identified by an ordered pair of nodes. Branches are named after the child node, so $(u,v)$ is referred to as branch $v$. Each tree has a unique \emph{root} node that has no parents. Any other node $v$ has a parent $\pa(v)$. The children of $v$ are denoted by $\ch(v)$. Associated with each observed sequence is a leaf node $v$ such that $\ch(v)=\emptyset$. We identify the nodes of each $T_i$ with integers $\{1,\ldots,2N{-}1\}$, the first $N$ nodes being the leaves. 
Each tree $T_i$ is fully specified by its topology $\canon_i$ and node times $\ts_i = (t_{i,1},\ldots,t_{i,2N{-}1})$ as exemplified in Figure \ref{fig:SPR}a. The topology $\canon_i$ can be represented as a $(N{-}1)\times 2$ matrix whose $k$th row includes the two elements of $\ch(N{+}k)$. The two elements of each row of $\canon_i$ are placed in increasing order. We have $t_{i,n}=0$ for the leaf nodes, while non-leaf nodes are indexed in increasing time order.

\begin{figure}
\begin{center}
\includegraphics[width=.9\textwidth]{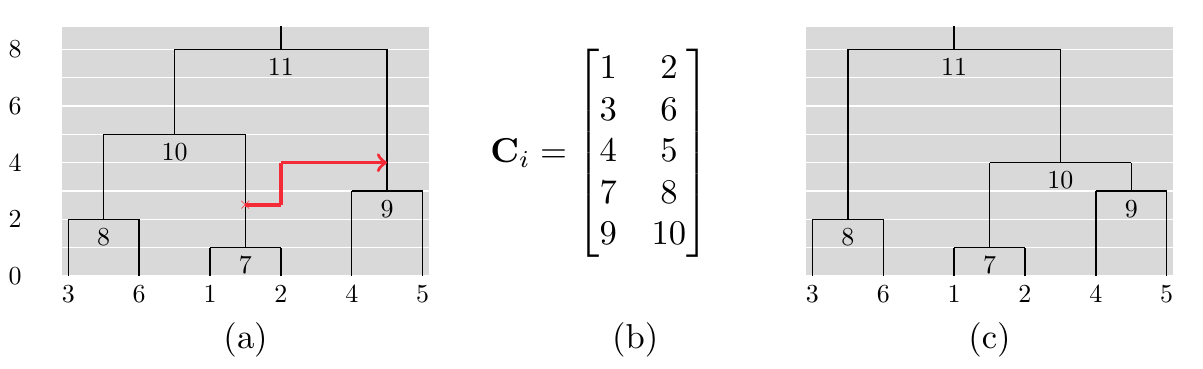}
\end{center}
\caption{(a) A graph representation of  a coalescent tree $T_{i}$ having the topology $\canon_{i}$ shown  in (b), with node times $\ts_{i} = (0,0,0,0,0,0,1,2,3,5,8)$, $u_{i} = 7$, $v_{i} = 9$, $r_{i} = 2.5$, $w_{i} = 4$. (c) The resulting tree after applying the SPR operation $(u_{i},v_{i},r_{i},w_{i})$ to the coalescent tree $T_{i}$.
}
\label{fig:SPR}
\end{figure}

Given $T_i=(\canon_i, \ts_i)$, for some $1 \leq i \leq S-1$, the next tree $T_{i+1}$ is determined by the occurrence (or not) of a recombination between sites $i$, $i+1$, and the
specification of such recombination. A recombination is represented here via an SPR operation characterised by a quadruple consisting of two nodes, $u_{i}$, $v_{i}$, and two positive real times $r_{i}$, $w_{i}$. If no recombination happens between sites $i$ and $i{+}1$ then by convention $(u_{i},v_{i}, r_{i}, w_{i})=(0,0,0,0)$, representing the identity SPR operation, and consequently $T_i=T_{i+1}$. If a recombination does occur, then the subtree\footnote{Without exception we use \emph{subtree} to refer to a tree whose leaves are also leaves in the original tree.} rooted at node $u_{i}$ is pruned from the tree at time $r_i$ and subsequently regrafted back into the tree onto branch $v_{i}$ at time $w_{i}>r_{i}$, as shown in Figure~\ref{fig:SPR}.

We  proceed to a detailed description of the dynamics of the hidden tree process. We need to provide a few  definitions. For measurable spaces $(\X,\cX)$, $(\Y,\cY)$, $(\Z,\cZ)$ and probability kernels $K_{1}:\X \times \cY \to [0,1]$ and $K_{2}:\Y \times \cZ \to [0,1]$ we define the probability kernel $K_{1}K_{2}:\X\times (\cY\otimes\cZ)$ so that for $x \in \X$, $A \in \cY\otimes\cZ$
\begin{equation*}
(K_{1}K_{2})(x,A) = \\
\int K_{1}(x,\ud y)\int K_{2}(y,\ud z) \ind_{A}(y,z).
\end{equation*}
For a kernel $K:\X\times \cY \to [0,1]$ and a probability measure $\mu:\cX\to [0,1]$ we define the probability measure $\mu K:\cX\otimes\cY\to[0,1]$ such that, for $A \in \cX \otimes \cY$
\begin{equation*}
(\mu K)(A) = \int \mu(\ud x)\int K(x,\ud y) \ind_{A}(x,y).
\end{equation*}
For probability measures $\mu_{1}$, $\mu_{2}$ defined on $(\X,\cX)$, $(\Y,\cY)$, respectively, we let $\mu_{1}\otimes\mu_{2}$ denote the product measure on $(\X\times\Y,\cX\otimes\cY)$. The $k$-fold iterate of a kernel $K$ is written as $K^{k}$.

The SMC process is initiated with a standard coalescent tree, followed by Markovian transitions defined via SPR operations. Formally, we define the initial distribution $\mu$ and the kernel $K$ via the coordinate-wise decompositions
\begin{align*}
\mu = \mu_{\canon,\ts}K_{\bprune}K_{\tprune}K_{\tregrf}K_{\bregrf}, \quad 
K = K_{\canon,\ts}K_{\bprune}K_{\tprune}K_{\tregrf}K_{\bregrf},
\end{align*}
from which we obtain the law of the SMC process as $\mu K^{S-1}$. We now define the probability measure $\mu_{\canon,\ts}$ and   kernels $K_{\canon,\ts}$, $K_{\bprune}$, $K_{\tprune}$, $K_{\tregrf}$, $K_{\bregrf}$. 
First, we note that $\mu_{\canon,\ts}$ simply corresponds to the law of the standard coalescent tree \cite{kingman82a,kingman82b,hein_et_al05} with a formal definition as follows. The law $\mu_{\canon}$ of the initial canonical form $\canon_{1}$ is defined such that for all $1 \leq i \leq N - 1$ the pair $(\canon_{1}(i,1),\canon_{1}(i,2))$ -- where $\canon_{1}(i,j)$ denotes the element on the $i$th row and $j$th column of $\canon_{1}$ -- is uniformly distributed over the set
\begin{equation*}
\big\{\,(j,k) \in \cN_{i-1}\times \cN_{i-1}: j < k\,\big\},
\end{equation*}
starting from the set of leaves $\cN_{0} = \{1,\ldots,N\}$, and then having, for $1 \leq i < N -1 $,
\begin{equation*}
\cN_{i} = \left(\cN_{i-1} \setminus \{\canon_{1}(i,1),\canon_{1}(i,2)\} \right)\cup \{N+i\}.
\end{equation*}
To define the law $\mu_{\ts}$ of the  coalescence times $\ts_{1}$, we write 
%\begin{equation*}
$f(x;\beta) = \beta e^{-\beta x}$, $x\geq 0$,
%\end{equation*}
for the probability density of the exponential distribution with parameter $\beta>0$ and define $\mu_{\ts} = \mu_{t_{1}}\otimes \cdots \otimes \mu_{t_{2N-1}}$, where 
\begin{equation*}
\mu_{t_{i}}(\ud x) = 
\begin{cases}
\delta_{0}(\ud x), & 1\leq i \leq N\\
f\left(x;\beta_{0}\right)\ud x, & i = N+1\\ 
f\left(x-t_{i-1};\beta_{i-(N+1)}\right)\ud x, & i > N+1\\ 
\end{cases}
\end{equation*}
and $\beta_{i} = {N-i \choose 2}$, $0\leq i < N-1$. 
The initial distribution is simply $\mu_{\canon,\ts} = \mu_{\canon}\otimes \mu_{\ts}$.
Conditionally on a given coalescent tree $T_{i} = (\canon_{i},\ts_{i})$,  $1 \leq i \leq S-1$, with some probability there will be no recombination between sites $i$ and $i+1$, or there will be a recombination at a position chosen uniformly along the branches of the tree, with involved recombination rate parameter $\rho>0$. 
Formally, the conditional law of $\bprune_{i}$ is 
\begin{align*}%\label{eq:prune node kernel}
K_{\bprune}(T_{i},\ud x) 
&=  
\left(1-e^{-\rho L_{i}}\right)\sum_{n=1}^{2N-2}\frac{l_{i,n}}{L_{i}}\delta_{n}(\ud x) + e^{-\rho L_{i}}\delta_{0}(\ud x),
\end{align*}
where the total branch length $L_{i}$ and branch lengths $l_{i,n}$ of $T_{i}$ are defined as 
\begin{equation}\label{eq:tbl_and_bl}
L_i \defeq \sum_{n=1}^{2N-2} l_{i,n}.\quad\mbox{where }l_{i,n} \defeq t_{i,\pa(n)}-t_{i,n}.
\end{equation}
For a given coalescent tree $T_{i}$ and a pruned node $\bprune_{i}$, the conditional law of the pruning time $\tprune_i$ is defined as
\begin{align*}%\label{eq:recombination time kernel}
K_{\tprune}(T_{i},\bprune_{i},\ud x) &= 
\ind(\bprune_{i}\neq 0)\,
\dfrac{\ind(x \in [t_{i,\bprune_{i}},~t_{i,\pa(\bprune_{i})}])}{l_{i,\bprune_{i}}}\ud x  +
\ind(\bprune_{i}=0) \delta_{0}(\ud x).
\end{align*}
The indicator functions $\ind(\bprune_{i}=0)$ and $\ind(\bprune_{i}\neq 0)$ specify the conditional distributions of $\tprune_{i}$ in the two cases: recombination occurs ($\bprune_{i}\neq 0$) or it does not ($\bprune_{i} = 0$). 
Next, the SMC model assumes that branch $\bprune_i$ is pruned at time $\tprune_i$, so the segment of the branch above this time is deleted. The remaining part of the separated branch is extended backwards in time from $\tprune_i$ and is re-attached to the main body of the tree according to the standard coalescent tree dynamics, i.e.\@ according to a Poisson process with rate equal to the number of existing branches
%, excluding the separated branch, 
at any time instance. Formally, given $(T_{i},\bprune_{i},\tprune_{i})$, let $k$ be the number of nodes above the 
pruning time $\tprune_{i}$, including the parent node $\pa(\bprune_{i})$, and $\tilde{t}_1<\cdots < \tilde{t}_{k}$ their times 
with the convention $\tilde{t}_0 = \tprune_{i}$, $\tilde{t}_{k+1}=\infty$.
The conditional law of the coalescence time $\tregrf_{i}$ given $(T_{i},\bprune_{i},\tprune_{i})$ is 
\begin{align*}%\label{eq:coalescence time kernel}
K_{\tregrf}(T_{i},\bprune_{i},\tprune_{i},\ud x) &= \ind(\bprune_{i}\neq 0)f_{T_{i},\bprune_{i},\tprune_{i}}(x)\ud x  +
\ind(\bprune_{i}=0)\delta_{0}(\ud x),
\end{align*}
where, for $F(\,\cdot\,;\,\beta)$ denoting the cumulative distribution function of $\mathrm{Exp}(\beta)$, $\beta>0$, and  $\bar{F} = 1 - F$, 
\begin{align*}
f_{T_{i},\bprune_{i},\tprune_{i}}(x) &=  \sum_{j=0}^{k} f(x-\tilde{t}_j;\beta_{j})\,\mathbb{I}\,[\,x\in (\tilde{t}_j,\tilde{t}_{j+1})\,]  
\left[\prod_{l=0}^{j-1}\bar{F}(\tilde{t}_{l+1}-\tilde{t}_{l}; \beta_{l})\right],
%{\color{orange} F(\tilde{t}_{j+1}-\tilde{t}_{j};\beta_{j})},
\end{align*}
with
\begin{equation*}
\beta_{j} = \begin{cases}k+1-j,&\tilde{t}_{j}\geq t_{i,\pa(\bprune_{i})}\\ k-j, &\text{otherwise}.\end{cases}
\end{equation*}
The conditional law of the regraft node $\bregrf_{i}$ given $(T_i,\bprune_{i},\tregrf_{i})$ is 
\begin{align*}%\label{eq:regraft node kernel}
K_{\bregrf}(T_i,\bprune_{i},\tregrf_{i},\ud x) 
&= \frac{\ind(\bprune_{i}\neq 0)}{|A(T_i,\bprune_{i},\tregrf_{i})|}\sum_{n \in A(T_i,\bprune_{i},\tregrf_{i})}\delta_{n}(\ud x) + \ind(\bprune_{i}=0)\delta_{0}(\ud x),
\end{align*}
where 
\begin{align*}
A(T_i,\bprune_{i},\tregrf_{i}) 
=  \left\{n\in  \{1,\ldots,2N-1\} \setminus \{\bprune_{i}\}:\tregrf_{i}\in(t_{i,n},t_{i,\pa(n)})\right\}.
\end{align*}
%
%To complete the construction of the SMC process, we still need to specify the conditional law of $T_{i+1}$ for $1 \leq i < S$ conditioned on the state $(T_{i},\bprune_{i}, \bregrf_{i}, \tprune_{i}, \tregrf_{i})$. 
%To this end, 
The variables $\bprune_{i}$, $\bregrf_{i}$, $\tregrf_{i}$ specify an SPR operation that applies to a coalescent tree $T_{i}$ to determine the next one 
%with a unique pair of canonical form and coalescent times 
which we denote by $\SPR(T_{i},\bprune_{i},\bregrf_{i},\tregrf_{i})$, hence we formally write
\begin{align*}
K_{\canon,\ts}(T_{i},\bprune_{i},\bregrf_{i},\tregrf_{i},\ud x) = \delta_{\SPR(T_{i},\bprune_{i},\bregrf_{i},\tregrf_{i})}(\ud x).\label{eq:canon kernel} \end{align*}

\subsection{Observation process}

Information about the tree process $T=(T_i)_{1\le i\le\maxsite}$, is obtained through the data $D = (D_{i,j}) = (D_{i})$, with columns $D_{i}\in\{0,1\}^{N}$. The mutation model generating $D$ given $T$ is a discrete version of the infinite-sites model \cite{griffiths_et_marjoram97}, and involves a  mutation rate parameter $\theta>0$.
% which %, rather than genomic sites being indexed continuously over the interval $[0,\maxsite]$,
%the $i$th site corresponds to the continuous interval $(i{-}1,i]$. The approximation is accurate when the mutation rate parameter $\theta>0$ is small, as is usually the case in practice.
%In the infinite-sites model, the number of mutations in tree $T_i$ has a Poisson($\theta L_i$) distribution, where $\theta$ is the mutation rate and $L_i$ is as defined in \eqref{eq:tbl_and_bl}. 
%
Given $T_i$, for a site $1\le i\le S$, the model specifies  that with probability $\exp(-\theta L_i)$ no mutations occur at site $i$, for $L_i$ is as defined in \eqref{eq:tbl_and_bl}, otherwise exactly one mutation arises, and the place where the mutation occurs is uniformly distributed over the branches of $T_i$. 
If a mutation has occurred at a site $i$, we can now infer from $D_i$ the branch on which it arose: it is the unique branch $b_i$ such that $D_{i,n}=1$ if either $n=b_i$ or $n$ is a descendant of $b_i$, otherwise $D_{i,n}=0$. 
%If no such $b_{i}$ exists, this can indicate a data error or multiple mutations that both are extremely rare events for typical mutation rates. 
%Therefore the approximation that at most one mutation may occur per site is reasonable and consequently zero likelihood is assigned to trees where the mutation branch $b_i$ cannot be uniquely inferred. 
The assumption of at most one mutation per site plays a role in our method as it implies that the sequences 
of trees inconsistent with it are not permitted. Under this assumption, the likelihood at site $i$ is
\begin{align}
\!\!\ls_{i}(T_{i}) = P(D_{i}|T_{i})
&= 
\begin{cases}
e^{-\theta L_{i}}, & \sum_{n}D_{i,n} = 0,\\
\left(1{-}e^{-\theta	 L_{i}}\right)\frac{l_{i,b_{i}}}{L_i}, &\sum_{n}D_{i,n} >0.
\end{cases}
\end{align}
Given $T = (T_i)_{1\leq i \leq S}$, mutations occur independently at each site, so the joint likelihood over any set of sites is the product of $\ls_{i}(T_{i})$ over $i$ in the set. 
If $\ls_i(T_{i}) > 0$, then we say that $T_i = (\canon_i,\ts_i)$ is compatible with $D_i$, and since the compatibility of $T_i$ depends only on the topology $\canon_i$ and not on $\ts_i$, we also say that $\canon_i$ is compatible with $D_i$. If $\canon_{i}$ is compatible with $D_i$ for all $i\in\{1,\ldots,S\}$, then we say that $T$ is compatible with $D$.

\section{Tree-bridging MCMC algorithm}
\label{sec:Bridge sampler}

We develop a Metropolis-Hastings %(more precisely, a Metropolis-within-Gibbs) - say this elsewhere rather than interrupt flow here
MCMC algorithm for sampling from $\pi=P(T|D)$, the distribution of the tree process $T$ over a genome interval, given the observation process $D$. %We consider $J$} overlapping segments of consecutive sites from the genomic region of interest. \textcolor{red}{$K$ updates are carried out for each such segment.}
Convergence of the algorithm is assured from any initial state for $T$, given regularity conditions on the chain (e.g.~Harris recurrent, see e.g.~\cite[page 221]{robert04}). % This needs to be explained - do the conditions hold for our algorithm or not?
However, a choice of initial state that is realistic given $D$ can give an important reduction in the time to convergence. One way to generate a good initial state is to use an existing algorithm from the literature to obtain a point estimate ARG by seeking to minimise the number of recombinations. Arbores uses a variant of the SHRUB algorithm of \cite{song_et_al05} (see also \cite[page 312]{gusfield14}) for this purpose.

\subsection{Segment selection}
The bridges are constructed over genome segments, which can be selected in various ways. All sites should be non-fixed, i.e.~not the conditioning sites, in at least one segment, allowing them to vary. The two end sites of the genome must be treated separately as they will involve only one-sided conditioning. By default, Arbores defines the segments so that, for a chosen $m\ge 1$, the first segment stops at and conditions upon the $(m+1)$th \emph{segregating site} (site where at least one 0 and one 1 are observed). Other segments contain $2m+1$ consecutive segregating sites (including the left/right conditioning sites, which are always segregating) of which the leftmost $m+1$ are shared with the preceding segment. The last segment (with a left-side only conditioning) includes $m+1$ segregating sites from the preceding segment, plus up to $m$ additional segregating sites to reach the right end site of the genome. An example of the segment selection procedure is given in Figure \ref{fig:bridge segments}. More formally, the segments are defined as follows.
Let $\ns$ denote the number of segregating sites and let $s = (s_{1},\ldots,s_{\ns})$, be the indices of the segregating sites in a strictly increasing order. We call $2m+1$ the \emph{bridge length} and assume that $m \ll \ns$. %We also define an increment $l_{\mathrm{B}} = (m_{\mathrm{B}}-1)/2$. 
For each $j$th segment, where $j=0,\ldots,\jmax$, the initial and terminal site indices $(\alpha_j,\beta_j)$ are defined as follows. For $j=0$, we set $(\alpha_0,\beta_0) = (1, s_{m+1})$, and thereafter 
\begin{equation*}
\alpha_{j} = s_{m(j-1)+1}, \quad \beta_{j} = s_{m(j+1)+1},\quad 1\leq j < \jmax,
\end{equation*}
where $$\jmax \defeq \max\left\{j \in \N:mj +1<\ns \right\},$$ and 
$(\alpha_{\jmax},\beta_{\jmax}) = (s_{m(\jmax-1)+1},\maxsite)$. 
Figure \ref{fig:bridge segments}  shows an example illustration of the deduced segments 
for a case when $m = 4$ and $\ns=20$. 

\begin{figure}
\begin{center}
\includegraphics[width=.9\textwidth]{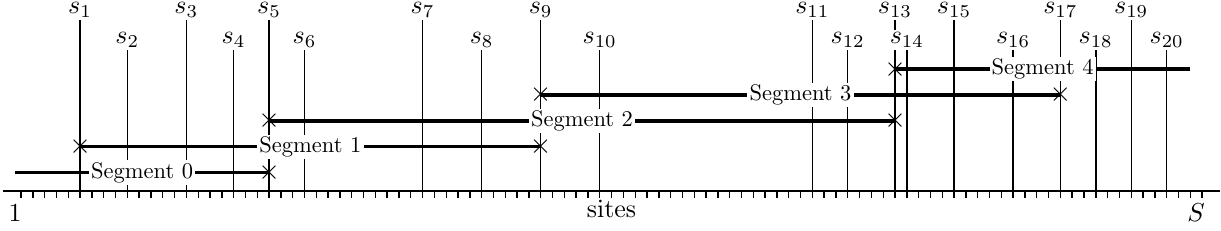}
\end{center}
\caption{Segregating sites $s_{1},\ldots,s_{20}$ are highlighted. Sites within $j$th segment ($j=0,\ldots,4$) are illustrated with conditioning sites marked as $\times$.}
\label{fig:bridge segments}
\end{figure}

\subsection{Generation of bridge proposals}
\newcommand{\SPRc}{\mathcal{S}_{\mathrm{c}}}
\newcommand{\SPRt}{\mathcal{S}_{\mathrm{t}}}

We now focus on a specific bridge, so let $T$ and $\pi$ here denote the restrictions of the tree process and target distribution onto this chosen bridge. We denote by $q(T)$ the proposal distribution for the bridge tree process given its current state. 
%(the internal trees are ignored so the dependence on $T$ is only through the terminal trees $T_L$ and $T_R$)
%.
A bridge is updated by first sampling $T' \thicksim q(T)$, and then, with probability 
\begin{equation*}
\min\left(1,\dfrac{\pi(T') q(T)}{\pi(T)q(T')}\right),
\end{equation*}
the current state $T$ is replaced with $T'$; otherwise the current bridge $T$ is retained.
We now describe the mechanism for generating bridge proposal from $q(T)$. Each quadruple $(u_i,v_i,r_i,w_i)$ specifying an SPR operation for coalescent tree $T_i$, $L\le i\le R$, must either be equal to $(0,0,0,0)$, representing the identity operation, or satisfy the following conditions: i) $u_i$ is not the root, ii) $v_i\notin\{u_i,\pa(u_i)\}$ (this follows simply by our definition of $v_i$), iii) $w_i>r_i$. 

%\textcolor{red}{Alex: Kari, why is it $v_i\notin \pa(u_i)$?}

\subsubsection{Step 1: Tree scanning}
\label{sec:tree scanning}

The first step is to generate the set of all possible sequences of tree topologies over the bridge, compatible with the data $D$ and the assumption of at most one recombination between adjacent sites. We start with $\canon_{L}$, the topology of the left conditioning tree $T_{L}$, and subsequently construct a sequence, or a path, of topologies $(\canon_{i})_{L<i\leq R}$ according to all possible choices of SPR operation nodes, $(u_{i},v_{i})$. Paths that include a topology not compatible with the data are discarded. For the first (leftmost) segment, we have conditioning from the right but not from the left, and so we proceed in reverse direction, from right to left, but otherwise the construction is the same. The tree scanning step is theoretically straightforward, but can be computationally intensive. Each topology at site $i$ can give rise to $\mathcal{O}(N^2)$ topologies at site $i+1$ (see \cite{song06}); the complexity of constructing these topology paths over the segment can be exponential in its length. Heuristics to make the algorithm practically feasible are described later in Section \ref{sec:enabling heuristics}.

\subsubsection{Step 2: Time adjustment}
\label{sec:time adjustment}

By construction, the topology paths generated in Step 1 are consistent with $\canon_L$, but many of them may not be consistent with $\canon_R$ (this is not required for the rightmost segment, for which this time adjustment step can be skipped). Given a topology path $(\canon_{i})_{L\le i \le R}$, one can associate any non-identity transition from $\canon_{i}$ to $\canon_{i+1}$ with the deletion of a node in $\canon_{i}$ and the creation of a new one in $\canon_{i+1}$. We view this pair of events as a \emph{move} of a single node: the composition of all such moves from site $L$ to site $R$ for a node make up the \emph{trajectory} of that node. Thus, when $\canon_{i}\neq \canon_{i+1}$, one node in $\canon_{i}$ moves to a new position, while all other nodes are unchanged. %Figure \ref{fig:node association} illustrates an example of tracing a move of nodes from a topology to the next. 
Figure \ref{fig:plot_traj} shows an example of the node trajectories from $L$ to $R$. The dashed lines depict node trajectories over consecutive sites. The leaf nodes never move and are therefore omitted from Figure \ref{fig:plot_traj}.

\begin{figure}
\begin{center}
\includegraphics[width=0.85\textwidth]{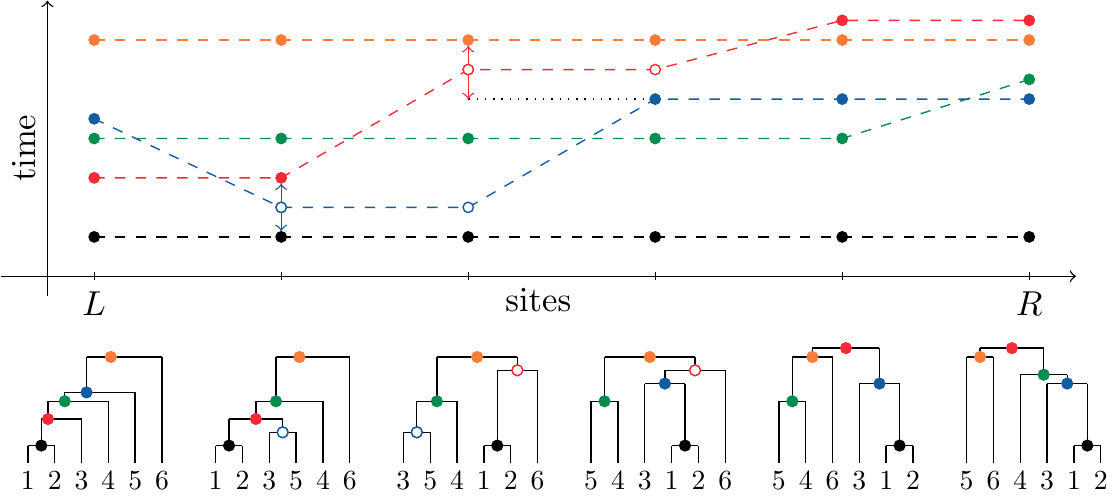}
\end{center}
\caption{Node trajectories for $N=6$ sequences and 5 recombinations. The underlying tree path is shown below the axes. Here $\kappa^{-}(L{+}2,8)=\kappa^{+}(L{+}2,8)=1$ and $\kappa^{+}(R{-}2,9)=1$, but $\kappa^{-}(R{-}2,9)=0$. The nodes affixed by $\ts_L$ and $\ts_R$ are depicted by $\bullet$, {\color{myred}$\bullet$}, {\color{myblue}$\bullet$}, {\color{mygreen}$\bullet$}, and {\color{myyellow}$\bullet$}, while the free nodes depicted by {\color{myred}$\circ$} and {\color{myblue}$\circ$}.}
\label{fig:plot_traj}
\end{figure}

For each site $i$ and non-leaf node $n$, we define the indicators $\kappa^{-}(i,n)$ and $\kappa^{+}(i,n)$ so that $\kappa^{-}(i,n)=1$ (resp.~$0$) if node $n$ at site $i$  moved (resp.~did not move) at the sites from $L$ to $i$. Indicators $\kappa^{+}(i,n)$ are defined in a similar manner, but referring to sites from $i$ to $R$. 
Given these   definitions we proceed as follows. First, we remove paths that do not terminate at $\canon_R$. Then, for each path $(\canon_{i})_{L \leq i \leq R}$ we identify the subset $\cF$ of 
$ \{L,\ldots,R\}\times \{N{+}1,\ldots, 2N{-}1\}$ defined as
\begin{equation}
\label{eq:affixed}
\cF = \big\{(i,n): \kappa^{-}(i,n)=0 \text{ or }\kappa^{+}(i,n)=0 \,\big\}.
\end{equation}
We now take the times $\ts_L$, $\ts_R$ of the left and right conditioning trees, ${T}_L$ and ${T}_R$, also under consideration. All nodes in $\cF$ will be affixed to some time instance in  $\ts_L$ (if $\kappa^{-}(n,i)=0$) or $\ts_R$ (if $\kappa^{+}(n,i)=0$). The nodes in $\cF$ are termed \emph{affixed} while all the other nodes are \emph{free}. Free nodes can only appear in trajectories involving at least two node moves, otherwise they will be affixed to a time either in $\ts_L$ or $\ts_R$.

\begin{figure}
\includegraphics[width=\textwidth]{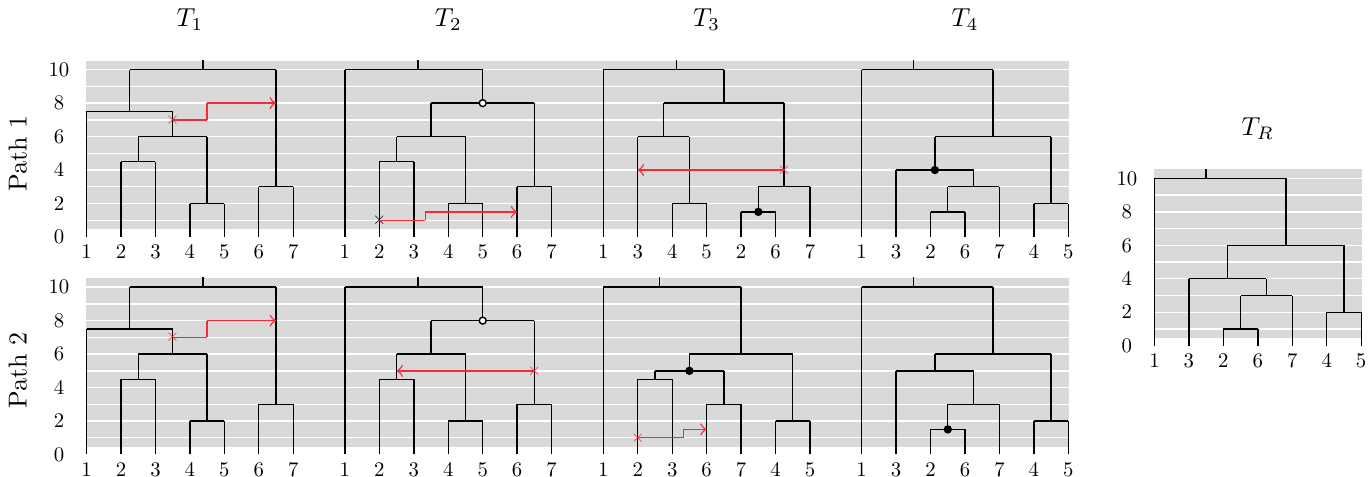}
\caption{Two paths and their generating SPR operations. Path 1 is compatible with the right conditioning tree but Path 2 is not. For Path 2, the conditioning requires the node at time $5$ to be adjusted to time 4, but this cannot be done as it is clear from the second tree of   Path 2, that the node created by the second SPR operation must be in the interval $(4.5,6)$. Hence Path 2 is not compatible with $T_L$, $T_R$. Nodes introduced by the SPR operations that are affixed are depicted by $\bullet$ and free nodes are depicted by $\circ$.
} 
\label{fig:time adj}
\end{figure}

%Note that when adjusting the times according to $\tsL$ and $\tsR$,

%I don't understand the next paragraph I realise after editing it to what I thought it meant, that this conflicts with the next paragraph so now I am confused

Not all topology paths will be compatible with ${T}_{L}$ and ${T}_{R}$. In practice, we iterate over the entire set of topology paths and verify for each individual path whether the free nodes can be affixed according to ${T}_{L}$ and ${T}_{R}$; if not, the path is discarded. Figure \ref{fig:time adj} shows two paths: one can be affixed while the other cannot. 

\subsubsection{Step 3: Sampling}
\label{sec:sampling}

After Step 2, we have a set of topology paths that are compatible with %$D$ and
the two-sided conditioning. To fully specify the bridge proposal we sample one of these paths uniformly at random, then generate the times of its free nodes and the pruning times.
In Figure \ref{fig:plot_traj}, there are four free nodes but two free times to sample as the four nodes are comprised of two pairs, the paired nodes being positioned at the same time instance. There can be at most one free time per site and we generate them from left to right. The vertical two-headed arrows in Figure \ref{fig:plot_traj} depict the sampling domains for this example. Notice that for the domain of the free node at site $L{+}2$ one must also consider the affixed nodes on both sides, so that the obtained sequence of coalescent trees is compatible with the orderings implied by the chosen topology path. % I don't understand this, or at least why it needs to be said: Notice also that each free time corresponds to some regraft time $c_i$.

%As illustrated in the example in Figure \ref{fig:plot_traj}, 
%for each free time, say at the site $i$, the sampling domain of the free time depends on the previously sampled free times on sites $L\leq j < i$, but also on the affixed times on the sites $i < j \leq R$. 
To provide an explicit formula, suppose that $t_{i,n}$ has been identified as a free time, the first subscript $i$ designating the site and the second subscript $n$ designating the node. Once $t_{i,n}$ is affixed, we let $l\in\{0,\dots, R-i-1\}$ denote the number of subsequent 
sites where this node will remain at the same time, and $ 
n_1, \ldots, n_l\in\{N+1,\ldots, 2N-1\}$ 
the labelling of this node at these next sites.
%the subsequent node times $t_{i+1,n_1},\ldots, t_{i+l,n_l}$ must also be affixed to the same time. Here $l\in\{0,\dots, R-i-1\}$ is the number of site transitions when the node $n$ does not move and the nodes $n_1,\ldots, n_l$ are the nodes at sites $i+1,\ldots,i+l$ that lie on the same trajectory with the node $n$ at site $i$. 
The sampling domain of 
$t_{i,n}$ is then $\mathcal{D}(t_{i,n}) = [t_{i}^{\downarrow},t_{i}^{\uparrow}]$, where 
\begin{align*}
t_{i}^{\downarrow} &\defeq \max\{t_{i,n-1}, t_{i+1,n_1}^{\downarrow},\ldots, t_{i+l, n_l}^{\downarrow}\}, \\
t_{i}^{\uparrow} &\defeq \min\{t_{i,n+1},t_{i+1,n_1}^{\uparrow},\ldots, t_{i+l, n_l}^{\uparrow}\},
\end{align*}
and $t_{i+s,n_s}^{\downarrow}$ (resp.~$t_{i+s,n_s}^{\uparrow}$)
denotes the largest (resp.~smallest) affixed time below (resp.~above) node $n_s$ at site 
$i+s$, for $1\le s \le l$. Note that some of the times involved in the $\min$ can be $+\infty$ if there is no node with affixed time above the node of interest, so the domain $\mathcal{D}(t_{i,n})$ may or may not be bounded from above. 
%An unbounded domain occurs when $n=2N{-}1$, i.e.~the free node is the root.
For bounded domains we choose a uniform distribution and for unbounded domains we choose the shifted exponential distribution with density $\exp(-x+s)$, for all $x > s$ where $s = \inf \{x\in\mathcal{D}(t_{i,n})\}$.
The sampling of the pruning times is easier as, once all nodes are affixed, we simply have the sampling domains
\begin{equation*}
\mathcal{D}(r_i) = \big[\, t_{i,u_i}, \,\min ( t_{i,\pa({u_i})}, w_i  )\,\big], \quad L \le i < R.
\end{equation*}
Again, we choose a uniform distribution. The generation of the bridge proposals is thus completed.

% \input{acceptance_probability.tex}

% !TEX root = main.tex

\subsection{Acceptance probability}

We introduce the notation $\tilde{T}_i = \{T_{i},\bprune_{i},\bregrf_{i}, \tprune_{i}, \tregrf_{i}\}$ to refer to the coalescent tree $T_i$ 
endowed with its pruning and regraft 
nodes and times. 
We write $T'=(\tT'_L,\ldots,\tT'_R)$ for the bridge proposal and $T=(\tT_L,\ldots,\tT_R)$ for the current state, 
with $T_L=T'_L$, $\tT_R=\tT'_R$. As the tree process is Markovian and the observations independent (given the trees), the target distribution ratio in the Metropolis-Hastings acceptance probability can be written as
\begin{equation}
\dfrac{\pi(T') }{\pi(T)}
= \frac{\prod_{i=L}^{R-1}K(\tT'_{i-1},\tT'_{i})\, \prod_{i=L+1}^{R-1}\ls_{i}(T'_{i})}
{\prod_{i=L}^{R-1}K(\tT_{i-1},\tT_{i})\,\prod_{i=L+1}^{R-1}\ls_{i}(T_{i})},\label{eq:target ratio}
\end{equation}
under the convention  that $K(\tT_{0}',\tT_{1}') = \mu 
%_{\canon,\ts}
(\tT_{1}')$. %Note that due to the two sided bridge conditioning, the acceptance probability depends only on the observations within the bridge segment.
%and $K(T_{0},T_{1}) = \mu_{\canon,\ts}(T_{1})$, with $\mu_{\canon,\ts}(\,\cdot\,)$ denoting the marginal distribution of unconditional 
%coalescent tree dynamics.
 %
Let $Q_{\cP}$ denote the proposal probability mass function of the topology paths after Step 2,
let $(\canon_L',\ldots, \canon_R')$ denote the topology paths corresponding to $T'$, and let $M'$ denote the number of 
free times $(f'_{1},\ldots, f'_{M'})$ for $T'$
with joint density $Q_{f}(f'_{1},\ldots, f'_{M'})$. Also, we let  $Q_r(r'_L,\ldots, r'_{R-1}\,|\,f'_{1},\ldots, f'_{M'})$ denote the conditional joint density for the pruning times. The corresponding quantities for the current position are defined analogously in the obvious way by omitting the prime. We complete the specification of the acceptance probability in our bridging 
algorithm via the calculation
\begin{align}
\dfrac{ q(T)}{q(T')} 
=
\dfrac{Q_{\cP}(\canon_L,\ldots,\canon_R)}{Q_{\cP}(\canon'_L,\ldots,\canon'_R)}\times
\dfrac{Q_f(f_{1},\ldots, f_{M})}{Q_f(f_{1}',\ldots, f'_{M'})}
\times \dfrac{Q_r(r_L,\ldots, r_{R-1}\, |\,f_{1},\ldots, f_{M} )}{\,Q_r(r_L',\ldots, r'_{R-1}\,|\,f_{1}',\ldots, f'_{M'})} .
\label{eq:acceptance probability}
\end{align}

The above algorithm corresponds to an Independence 
Metropolis-Hastings algorithm. It is easy to check  
that, under reasonable choices for the proposal, $\sup_T \pi(T)/q(T)<\infty$, with the supremum taken over the support of $\pi(\cdot)$, thereby guaranteeing uniform ergodicity for the bridge sampler. To see that, notice that the number of the permitted topology paths is finite, so one only needs to assign non-zero probability to each one of them (e.g.~via the discrete uniform mentioned earlier in the text); then, the number of free regraft and pruning times is also finite, and one needs to select a lower bounded density for the times of finite 
support (e.g.~the continuous uniform referred to  earlier) and a proper density for the 
unbounded times (e.g.~the `prior' exponential density chosen above will dominate the posterior density). 
In terms of the complete method, the use of overlapping blocks 
implies that uniform ergodicity will also hold as long as all
topologies over the complete genome supported by the posterior 
can be visited by the proposal during the execution of the algorithm. We conjecture that this is true due to the flexibility of the 
method, but a rigorous proof requires elaborate work 
on a theme exceeding the scope of the paper. 
Such ergodicity results are of qualitative nature, and the efficiency 
of the method is determined by more practical considerations, e.g.~the computing cost for the realisation of the proposal or the size of the acceptance probability.

\subsection{Coalescent time sampling}

%The algorithm described above would sample only coalescence times that were part of the free times associated during a bridging update. In this case, some coalescence times would in fact never be sampled. Therefore 
Arbores implements an additional MCMC   step that proposes the movement only of the coalescence times. This time sampling is scheduled for execution after each complete execution of all bridge segment steps. The sampling is done via a standard Metropolis-Hastings step for each coalescence time separately in a manner that preserves the coalescence time ordering throughout the entire sequence. The sampling distribution used in this step is a truncated version the conventional exponential distribution for coalescence times (see e.g.~the discussion following equation (4) in \cite{hudson90}). Truncation is needed to ensure the preservation of the coalescent time ordering.

\section{Enabling heuristics}
\label{sec:enabling heuristics}

The above `idealised' algorithm outlined can be too computationally expensive to be implemented in practice. This is mainly due to the worst case exponential size of the set of topology paths in the number of sites within the bridge segments. % which renders the construction of this set infeasible. 
Therefore, we need to introduce heuristics, some of them approximative, to deliver a practical algorithm. The main principle underlying the choice of heuristics is the one of \emph{minimum-recombination}.
That is, we allow the data to enforce recombinations when required, and we avoid placing recombinations at positions not supported as such by the data. At the same time we require the algorithm to traverse the space of different recombinations, at different sites, so we perform another heuristic to also allow for this. One can think of our algorithm as one that follows principles from deterministic minimum-recombination algorithms used in a separate stream of the subject literature
to reduce computing costs, while still being a proper (if approximate) posterior sampling MCMC algorithm. Due to this minimum recombination approach, we expect the approximations to improve as the ratio of mutation rate over recombination rate per site increases. We also wish to acknowledge that, due to some of the heuristics, our approximation is not guaranteed to be reversible. However, problems due to the irreversibility are expected to be rare and hence this is considered an acceptable approximation. The heuristics are described below.

\subsection{Parsimonious SPR operation}
\label{sec:necessary recombinations}

The first heuristic aims to reduce the cardinality of the set 
of proposed tree paths, by switching from an exhaustive tree scan
to one that adopts a parsimonious approach as regards to the number of SPR operations, taking under consideration the information available in the data.
%
%
% propose as few SPR operations as possible. Whether an SPR operation between consecutive sites is needed or not is determined by the data; 
%
 
For non-segregating sites any tree is compatible, so in the generation of the topology paths we omit non-segregating sites and
% instead of iterating the SPR operations between consecutive sites, 
perform SPR operations only between consecutive segregating sites -- if necessary, i.e.~if none of the currently generated topologies is compatible with the next segregating site.
This leads to a substantial reduction in the computational cost.
Some further considerations are needed here, as it may well be the case that more than one SPR operation is needed between two  segregating sites to generate trees compatible to the data. 
%Previously we assumed that at most one SPR operation per site transition can take place. Now, since the number of SPR operations may vary,
Thus, we iterate -- if necessary -- over the number of SPR operations.
In practice, Arbores attempts to construct the topology paths with none, one, or (at most) two SPR operations between each pair of consecutive segregating sites. Topologies available at the current 
step that require more SPR operations than others are removed (their path is deleted from the set of currently constructed paths).
%and if no compatible tree is generated, two SPR operations are used. 
%Topology paths that fail the current step 
%are removed from the set of paths.
 In the numerical applications we have looked at, requiring more that two SPR operations between consecutive segregating sites was a rare event; in the cases that 
this does occur, Arbores skips the update on that segment. Even if a segment update is skipped at an iteration of the MCMC sampler, it is likely that due to the updates of the other segments, the skipped segment can be updated the next time it is processed.

\subsection{Extra recombinations }
\label{sec:extra recombinations}

The parsimonious rules for allowing SPR operations described above come with a drawback: for a given bridge, by adopting, effectively, a minimal number of recombinations principle, one can  deduce that all generated topology paths will have the same number of recombination sites along the bridge and, in fact,  the same number of recombinations  between all pairs of segregating sites contained in the bridge.
To overcome this rigidity, we allow in a controlled way the topology paths to consist of more than the minimum number of SPR operations. This is done by first performing the tree scan as described in part (a), resulting in a set of tree topology paths. After this we repeat the tree scan in a sightly modified way. For the first pair of consecutive segregating sites that does not require an SPR operation, we nevertheless introduce one. For the remaining pairs of segregating sites the tree scan proceeds normally by introducing SPR operations only if needed. The resulting topology paths are added to the set of previously generated paths. We repeat this modified tree scan step for each remaining pair of consecutive segregating sites where no SPR is required. In this way we allow only one extra SPR operation for one pair of consecutive segregating sites at a time, with the aim of keeping the combinatorics manageable. At the same time, these additional SPR operations are enough to allow mobility in the number of recombinations and their positions.

%In Arbores, for each segregating site transition for which no SPR operations are required, 
%in addition to the identity operation one extra SPR operation is allowed (such that the new topology is still combatible) per path. \textbf{[You are not being precise here.]} This modification  allows mobility in the number of recombinations and their position while keeping the size of the topology path set manageable.

\subsection{Subtree search}
\label{sec:subtree search heuristics}

The computationally most expensive part of the algorithm, even after implementing the heuristics described above, is the iteration over the possible SPR operations in the construction of the topology paths. In a naive implementation, one would simply apply all possible sequences of SPR operations and then check for each individual outcome whether the resulting tree is compatible with the data or not. Some computation can be saved by observing that certain operations are known in advance to produce an incompatible tree. 
We identify the operations that may produce a compatible tree as follows. Consider a tree at a segregating site $i$. We assign colours (black or white) to its leaf nodes, by specifying that the leaf $n$ is black if the $n$th observation at the \emph{next} segregating site, say $j$, equals 1 and white otherwise. Note, that we are considering the data at the next segregating site $j>i$ because the aim is to characterise the operations that produce a tree compatible with the data at site $j$. The colouring is extended to all nodes by recursively defining the colour of a node to be equal to the common colour of its children, if such colour exists, and grey otherwise, as demonstrated in Figure \ref{fig:heuristic operations}. Note also that the role of black and white nodes is not interchangeable, because for a tree to be compatible with the data, it must contain exactly one subtree whose all nodes are black and which is not a subtree of another subtree with black nodes only. Such condition is not imposed on the white nodes.
We then have to consider only two types of SPR operation nodes $(u_i,v_i)$:
\begin{itemize}%[(i)]\itemsep=\lineskip \topsep=0pt
\item[(i)] $u_{i}$ is black and $\pa(u_{i})$ is not black and $v_{i}$ is black,
\item[(ii)] $u_{i}$ is white and $\pa(u_i)$ is not white and $v_{i}$ is white, gray or black and $\pa(v_{i})$ is not black.
\end{itemize}

\noindent It is not guaranteed that these operations yield compatible trees, but in the case of single SPR operation between segregating sites, it can be proven that the set of operations yielding a compatible tree is a subset of such operations. The proof is included in the Appendix, \mbox{Theorem \ref{thm:heuristic}} in Section \ref{sec:exhaustiveness}. This implies that this heuristic does not introduce any additional approximations. Indeed, one can see in Figure \ref{fig:heuristic operations}, showing all SPR operations yielding a compatible tree, that each of these operations are either of type (i) or (ii) defined above. For more than one SPR operation between segregating sites, this is not true and the reduction of the SPR operation search set to operations of type (i) or (ii) will result in an approximation.

\begin{figure}
\begin{center}
\includegraphics[width=.66\textwidth]{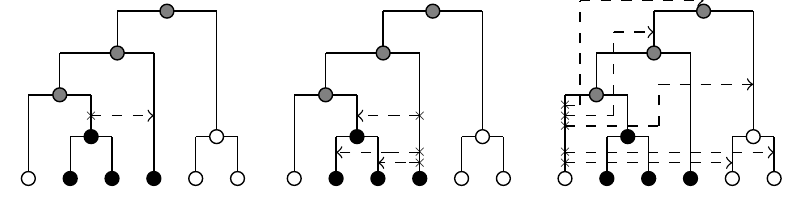}
\end{center}
\caption{All SPR operations that yield a compatible tree. Each operation is of kind (i) or (ii).}
\label{fig:heuristic operations}
\end{figure}

\subsection{Output}
We note that -- through the heuristics -- the algorithm aims to provide a principled approximation to 
the idealised MCMC algorithm defined in the previous sections.
Ultimately, the proposal will provide samples from the support of the idealised posterior:
e.g.\@ when the proposal, using the heuristics, has determined a recombination between two successive segregating 
sites, the non-segregating sites are also taken under consideration and the recombination site
is selected in some manner (e.g.~at random) amongst the intermediate non-segregating sites and the right-side segregating one, yielding a topology path over all sites.
%Due to the approximate heuristics introduced above, it is possible that the current state of the chain is not included in the sampling set for the proposal. 
%This means that the chain is not reversible and the convergence of the algorithm is not guaranteed. Such occasion is however, expected to be a very rare event and therefore an acceptable approximation. In practice, we still need a rule for how to deal with such a situation, should it happen. 
%
%From an implementational perspective the problem is that we cannot compute the contribution of the topology path in the acceptance probability. This being the case, we simply choose to ignore the contribution and calculate the acceptance probability using only the target distribution densities and the time densities of the proposal.

% --------------------------------------------------------------------
% --------------------------------------------------------------------
% NUMERICAL EXPERIMENTS
% --------------------------------------------------------------------
% --------------------------------------------------------------------
\section{Numerical experiments}
\label{sec:numeric}

We carried out two numerical experiments with the proposed algorithm. In the first experiment, we ran the algorithm on the well-known Kreitman data~\cite{kreitman} and in the second experiment we carried out a comparison with the 
recently proposed ARGweaver algorithm \cite{rasmussen_et_al14}, sometimes perceived as the state-of-art method for this problem. 

\subsection{Kreitman data}
\label{kreitman}

The  Kreitman data were first preprocessed by removing duplicate rows and columns with minor allele count less than two. The resulting data consisted of DNA polymorphisms of 9 sequences across 2,287 sites of which 30 were segregating. We set the mutation parameter to $\theta =  0.013$ corresponding to approximately 30 mutations on average for the data of the given size. The recombination rate was set to $\rho = 0.0035$ corresponding to approximately 8 recombinations in average. The minimum number of recombinations for these data is known to be 7 (see, e.g.~\cite[page 144]{gusfield14}).

The chain was run for $2\times 10^5$ iterations. Iteration here means either the processing of a single segment or sampling a single coalescent time. This number of iterations amounts to all segments having been sampled approximately 6,500 times, the exact number slightly varying according to the number of recombinations at a given state of the chain (see the discussion at the end of Section \ref{sec:enabling heuristics}\ref{sec:necessary recombinations}). To confirm the consistency of the results, the algorithm was run four times independently. 

\begin{figure}[!t]
\includegraphics[width=.95\textwidth]{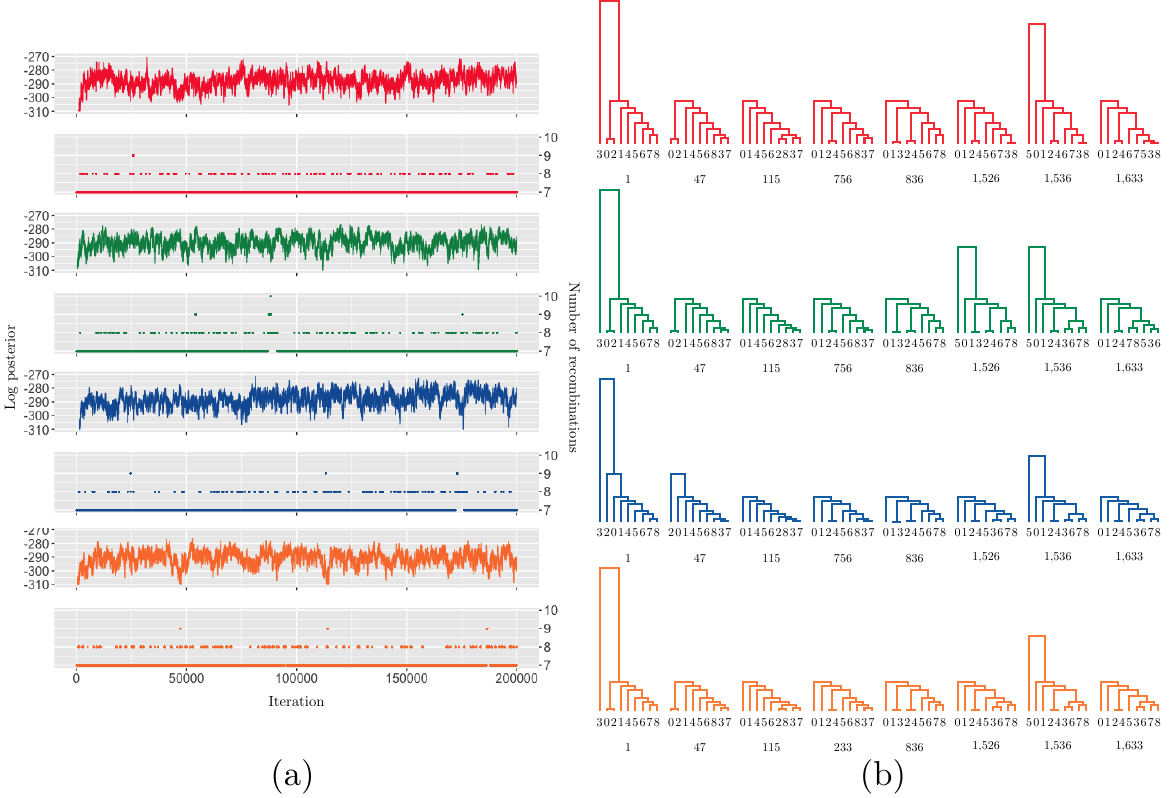}
%\begin{tabular}{@{}c@{\,}c@{}}
%\begin{minipage}{.49\textwidth}
%\includegraphics[width=\textwidth]{logpost.pdf}
%\end{minipage}
%&
%\begin{minipage}{.49\textwidth}
%\includegraphics[width=\textwidth]{combi.pdf}
%\end{minipage}\\
%(a) & (b)
%\end{tabular}
\caption{(a) The unnormalised log posterior and the number of recombinations trace plots for four independent runs. (b) The maximum a posteriori   tree sequences of four independent runs. Coalescence times are calculated as the averages of coalescence times over all tree sequences with matching structure. The site of the first occurrence of a given tree is shown below each tree.}
\label{fig:log posterior}
\end{figure}
Figure \ref{fig:log posterior}a shows the trace plots of unnormalised log posterior densities accompanied with the trace plots for the number of recombinations. Each of the chains appears to spend most of the time sampling ARGs with the minimum number of recombinations, but occasional visits to up to 10 recombinations can be seen. 

Figure \ref{fig:log posterior}b shows the maximum a posteriori (MAP) tree sequences for the four runs. The coalescent times of these MAP trees are the sample averages over the sequences with matching tree structure. Each of these sequences corresponds to an ARG. One can see that although the MAP ARGs are similar, they are not identical. This leads us to believe that in the posterior, there are ARGs with slightly different structure but approximately same posterior probability, hence the MAP ARGs in different runs do not have to be precisely the same (sampling variation might also have an effect). Each of the MAP ARGs is displaying 7 recombinations which is consistent with the number of recombinations trace plots in Figure \ref{fig:log posterior}.

The algorithm was implemented in C and is available at \url{https://github.com/heinekmp/Arbores}. The running time of the algorithm is random. For the numerical experiments reported here, the running time was approximately 10 hours on an off-the-shelf MacBook Pro 
(2.9 GHz Intel Core i7).

\subsection{Comparison with ARGweaver}

In our second experiment, we run both Arbores and ARGweaver on the same simulated data. The data were first generated with MaCS software of~\cite{chen_et_al09} after which it was preprocessed by removing sites with minor allele count less than two; the data ware also shifted so that first segregating site was given index one. After preprocessing, the data consisted of eight sequences across 10,000 sites of which 24 were segregating. The datasets and the exact calls of the algorithms are available at \url{https://github.com/heinekmp/Arbores/tree/master/test_runs}.

We compare the outputs of the algorithms against each other rather than against the true ARG that was used for generating the data. This is done because with  Bayesian MCMC methods the actual target is the posterior distribution which may or may not be an accurate representation of the generating ARG. 

\begin{figure}
\begin{center}
\includegraphics[width=\textwidth]{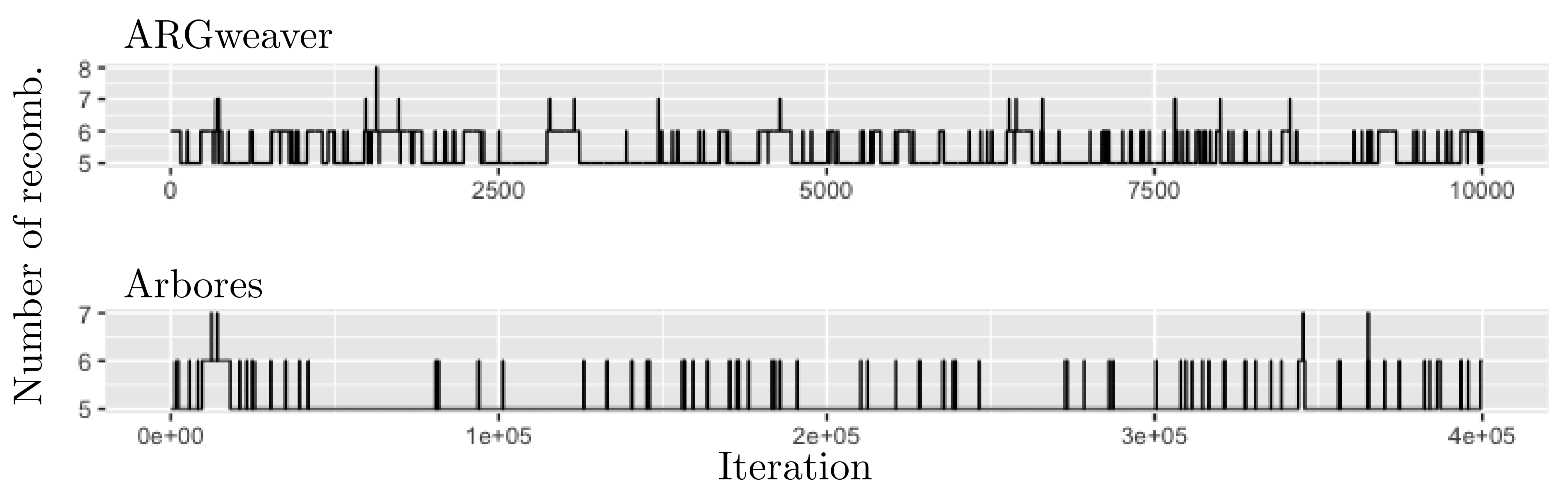}
\end{center}
\caption{Trace plots of the number of recombinations for ARGweaver and Arbores. For Arbores the horizontal axis shows all iterations which amounts to all segments being sampled roughly 15,000 times.}
\label{fig:nrecomb}
\end{figure}

Figure \ref{fig:nrecomb} shows the trace plots for the number of recombinations for both algorithms. The initialisation of Arbores aims at starting with the minimum recombination ARG but this is not guaranteed. The results nevertheless seem to suggest that the minimal number or recombinations for this dataset is 5. In some simulations ARGweaver sampled ARGs with fewer than 5 recombinations but in all such cases the sampled ARG was not compatible with the data. It is worth pointing out that Arbores never returns an ARG that is not compatible with the data. From Figure \ref{fig:nrecomb} we see that ARGweaver mixes somewhat better between different numbers of recombinations although the trace plots are similar. This may be due to genuinely different mixing properties, or that the algorithms target different posteriors.

\begin{figure}
\includegraphics[width=\textwidth]{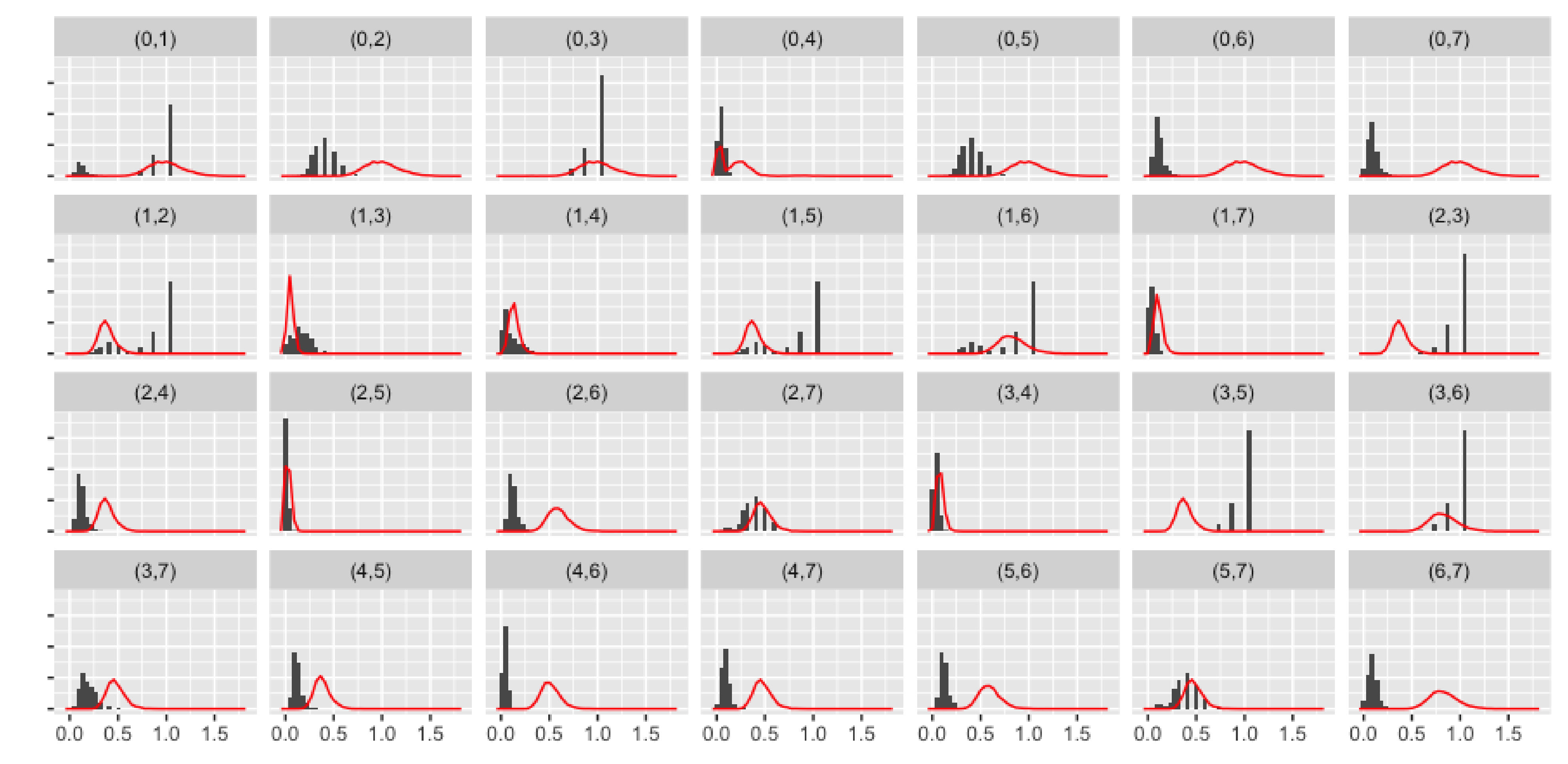}
\caption{Histograms of TMRCAs for Arbores (red) and ARGweaver (black). See the text for the definition of TMRCA. The title on each histogram indicates the pair of observed sequences indexed by $\{0,\ldots,7\}$.}
\label{fig:hists}
\end{figure}

In order to compare the resulting approximate posterior structure of the ARG, Figure \ref{fig:hists} shows the histograms of the times to most recent common ancestor (TMRCA) for both algorithms. Note that to better capture   the structure of the ARG we use a slightly non-standard definition of TMRCA. TMRCA was calculated for each pair of observed sequences (indexed by $\{0,\ldots,7\}$) by calculating the minimum time to the most recent common ancestor at each site and then by taking the minimum over all sites. TMRCAs were also scaled so that the greatest mean over all histograms for both algorithms is equal to one.

Our first observation from the results reported in Figure \ref{fig:hists} is the impact of the time-discretisation adopted within ARGweaver. Time-discretisation is not required for the algorithm suggested in this paper, whereas for ARGweaver it is a necessity as the algorithm develops on a finite state-space for ARGs. For both algorithms the histograms are calculated similarly with 40 equally spaced bins, but only for Arbores the histograms appear to represent the densities of  continuous distributions as desired. Particularly for the pairs with large mean, e.g.~(3,6), ARGweaver returns truncated one tailed distributions which is not an accurate representation of the reality. Adjustment of the parameterisation, e.g.~the maximum time or higher resolution of the time discretisation, are an obvious remedy to this issue, but due to the logarithmic scale of the time discretisation in ARGweaver, a large number of discrete times would be required to allow high resolution at large times, which will slow down the algorithm: doubling the number of time-discretisation points from 40 to 80, the computation time of 10,000 samples would increase from 1.5 hours to around 7.5 hours. Arbores took approximately 10 hours to generate 15,000 samples. The computing equipment used was as described in Section \ref{sec:numeric}\ref{kreitman}.

We also see that for some pairs of sequences, both algorithms identify a very recent common ancestor, see the pairs (1,3), (1,4), (1,7), (2,5) and (3,4). For some pairs both algorithms also agree on more distant common ancestor, see the pairs (0,1), (0,3), (1,6) and (3,6). Other histograms suggest that one of the algorithms was not able to explore all the modes, see e.g.~pairs (0,4), (1,5) and (1,6). Further discrepancies between the histograms can be explained by the fact that ARGweaver had the additional degree of freedom to decide which of the two alleles appearing at a given segregating site was the mutated one while for Arbores the data determines the mutated alleles; data entries equal to one are mutated. While such flexibility may sometimes be desired, it also increases additional variation to the results in cases where the mutated alleles are known. 

%The running time of Arbores was around 10 hours while the running time of ARGweaver was around 1.5 hours with the equipment described in Section \ref{sec:numeric}\ref{kreitman}. However, the parametrisation of ARGweaver can change the its running time significantly. For example a higher number of discrete times would increase the computation time. It should also be noted that the $4\times 10^{5}$ iterations that Arbores was run corresponds to roughly 15000 iterations of ARGweaver, which in these experiments was run for 10000 iterations.

\section{Concluding remarks}
\label{sec:conclude}

We have proposed an MCMC algorithm for simulating ARGs from the Bayesian posterior distribution given observed DNA polymorphism data. The algorithm is based on a novel bridging procedure which enables us to reduce the high dimensional problem into a set of substantially smaller scale problems. The main benefit of the algorithm is its suitability for parallel computing systems. This is due to the fact that to some extent the bridge segments can be processed in parallel independently of each other. 

Further research is still needed to improve the scalability of the algorithm in the number of observed sequences. In particular, we believe that substantial improvements can be made by replacing the tree scanning step with more sophisticated methods that reduces the number of discarded paths in the time adjustment step, and thus avoid redundant computations. A potential approach is a node scoring approach whereby nodes are assigned a score between 0 and 1 according to the data at its leaf descendants. Score 1 (resp. 0) would correspond to all leaf descendants assuming value 1 (resp. 0). These scores might be indicative of the compatibility of the tree with the data and could be used for steering the tree paths in a manner which reduces the number of discarded paths. Other aspects of the algorithm warrant further investigation, e.g.~non-uniform 
distributions for the choice of the proposed topology path could be tried. In general, 
Arbores provides a conceptually simple approach for sampling ARGs and as such, 
further potential improvements are seemingly easy to incorporate into its algorithmic framework.

Our comparison with ARGweaver shows that although its flexible parameterisation allows it to be used for more realistic problems than Arbores, the time-discretisation is a limitation which may be manifested already with modestly sized datasets. While a more thorough experimentation with ARGweaver might have lead to a parameterisation that efficiently mitigates the impact of time-discretisation, it can be argued that methods based on modelling continuous phenomena by discretisation, can reach the same accuracy as methods based on continuous models, such as Arbores, only asymptotically.

We have used SMC as an the Markovian approximation to ARG; our methodology can be applied without modification with the more accurate 
 \cite{wilt:15} SMC' approximation \cite{marj:06}.

\begin{description}
\item[Ethics:]The project did not involve research on humans or animals.
\item[Data:]The Kreitman data set is reported in \cite{kreitman} and can also be found in \cite[page 144]{gusfield14}. Other datasets and the exact calls of the algorithms are available at \url{https://github.com/heinekmp/Arbores/tree/master/test_runs}
\item[Contributions:]M.D.I, A.B., A.J., D.B.~planned and designed the project, M.D.I, A.B., A.J., D.B., and K.H.~carried out the research and wrote the manuscript. K.H.~implemented the algorithm and carried out the numerical experiments.
\item[Competing interests:]We have no competing interests.
\item[Funding:] This work was supported by the EPSRC grant ``Advanced Stochastic Computation for Inference from Tree, Graph, and Network Models'' (ref: EP/K01501X/1).
\item[Acknowledgements:]A.B. also acknowledges support from a Leverhulme Trust Prize.
\end{description}

\bibliographystyle{plain}
\bibliography{mybib}

\begin{appendices}

\section{SPR operation search heuristic}
\label{sec:exhaustiveness}

In this section we prove the claim made earlier that the subtree search heuristic introduced in Section \ref{sec:subtree search heuristics} does not introduce any error under the assumption of at most one recombination between consecutive segregating sites.

In addition to the black, grey, white colouring of the nodes described earlier, we introduce another classification of nodes depending on whether a node is the root of a maximal subtree of their respective colour or not. A subtree is said to be black (resp.~white) if all its nodes are black (resp.~white). A black (resp.~white) subtree is said to be maximal if any strictly larger subtree containing it is not black (resp.~white). A grey node cannot be a root of a subtree consisting only of grey nodes, so all grey nodes are classified as branch nodes by convention. 

With this classification together with the node colouring we can construct equivalence classes of SPR operations which we denote by $(x,y,z,w)$, where $x,z\in\{\mathrm{B},\mathrm{W},\mathrm{G}\}$ denoting the colours (black, white, grey) of the pruning and regrafting nodes respectively, and $y,w \in \{\mathrm{r},\mathrm{b}\}$ denoting the classifications (root of a subtree, branch) of the pruning and regrafting node, respectively. Also we use $\ast$ as a wildcard to denote any possible value of a given entry. We have the following theorem:
\begin{theorem}\label{thm:heuristic}
Let the colours black, white and grey be assigned to the nodes of $T_i$ as described above, and assume that $T_i$ contains more than one maximal black subtree. The SPR operation that results in a tree containing at most one maximal black subtree (i.e.~makes $T_i$ compatible with the data), if such operation exists, belongs to $\sprop{B}{r}{B}{\ast}$, $\sprop{W}{r}{W}{\ast}$, $\sprop{W}{r}{G}{\ast}$ or  $\sprop{W}{r}{B}{r}$.
\end{theorem}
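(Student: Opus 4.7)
Plan: I would argue by contrapositive, enumerating the SPR equivalence classes that are \emph{excluded} from the four listed types and exhibiting, in each case, two disjoint maximal black subtrees in the tree produced by the SPR. Since the next segregating site contains both $0$s and $1$s, compatibility is equivalent to the existence of \emph{exactly} one maximal black subtree, so producing two suffices for incompatibility. The excluded classes are $\sprop{G}{b}{\ast}{\ast}$, $\sprop{B}{b}{\ast}{\ast}$, $\sprop{W}{b}{\ast}{\ast}$, $\sprop{B}{r}{z}{\ast}$ with $z\neq\mathrm{B}$, and $\sprop{W}{r}{B}{b}$.

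Before the case analysis I would record three structural facts about how an SPR modifies the colouring. First, the internal topology of the pruned subtree $U=T_{u_i}$ is preserved verbatim, so every node strictly inside $U$ keeps its leaf-set and hence its colour. Second, outside $U$ the only nodes whose leaf-set changes are the strict ancestors of $u_i$ that are not ancestors of $v_i$ (which lose the leaves of $U$) and the strict ancestors of $v_i$ that are not ancestors of $u_i$ (which gain them). Third, the node $u''$ inserted above $v_i$ is black precisely when both $u_i$ and $v_i$ are black. Together these localise the colour changes to two ancestor paths and to $u''$ itself, which is all the bookkeeping the proof needs.

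The individual cases are then quick. For $\sprop{G}{b}{\ast}{\ast}$, $U$ contains at least one black leaf and hence at least one maximal black subtree rooted strictly below $u_i$, which survives because $U$ is preserved; this combines with a maximal black subtree outside $U$, or with a second one inside $U$ if nothing black lies outside (forced by the assumption of more than one maximal black subtree in $T_i$). For $\sprop{B}{b}{\ast}{\ast}$, the sibling of $u_i$ is itself black because $\pa(u_i)$ is black, so the residual piece of the containing maximal black subtree $M$ persists in $T_i\sm U$ with a non-black parent in the new tree; it is disjoint from wherever the black $U$ is regrafted, and a second maximal black subtree arises either from $U$ itself (if $v_i$ is not black) or from the rest of $B_O$. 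For $\sprop{B}{r}{z}{\ast}$ with $z\neq\mathrm{B}$, $u''$ is grey, so $U$ is itself a maximal black root, and any other maximal black root $m_j$ is incomparable in $T_i$ to both $u_i$ and $v_i$; one then verifies that the maximal black subtree containing $m_j$'s leaves in the new tree is disjoint from $U$. For $\sprop{W}{r}{B}{b}$, inserting the grey $u''$ between $v_i$ and $\pa(v_i)$ and simultaneously adding the white leaves of $U$ to $\pa(v_i)$ turns the formerly black $\pa(v_i)$ grey, so both $v_i$ and its sibling become maximal black roots; this splitting is exactly what the heuristic's ``$\pa(v_i)$ not black'' condition is designed to rule out.

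I expect the main obstacle to be $\sprop{W}{b}{\ast}{\ast}$, because pruning a white subtree can in principle turn a grey ancestor black by stripping its white leaves and so conceivably merge separate maximal black subtrees into a single enlarged one. The key lemma will be that the MRCA $c$ of all black leaves in $T_i$ coincides with the MRCA of all black leaves in the new tree: every path from a black leaf to the root avoids the removed node $\pa(u_i)$ (because $\pa(u_i)$ is white and its subtree contains no black leaf), and inserting $u''$ does not alter MRCA relationships among pre-existing nodes. Since $c$'s subtree must contain at least one white leaf---else $T_i$ would have a single maximal black subtree---and either is unrelated to $\pa(u_i)$ or is a strict ancestor of it, one then checks that $c$'s subtree in the new tree retains a white leaf: when $c$ is a strict ancestor of $\pa(u_i)$ the sibling $s$ of $u_i$ remains in $c$'s subtree and contributes white leaves disjoint from those of $U$, and when $c$ is unrelated to $\pa(u_i)$ the white leaves of $c$'s subtree are untouched. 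In either scenario the MRCA of the black leaves cannot become all-black, so the new tree stays incompatible.
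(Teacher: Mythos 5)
Your proposal is correct and takes essentially the same route as the paper: you enumerate the excluded SPR equivalence classes and show each one leaves at least two disjoint maximal black subtrees, which is exactly the case-by-case structure of the appendix Lemmata \ref{lem:first}--\ref{lem:last}. The only substantive difference is presentational --- your MRCA-of-black-leaves invariance lemma repackages what the paper's Lemma 4 gets from the white sibling of $u_i$ keeping all ancestors non-black; note also that your claim that another maximal black root is incomparable to $v_i$ can fail when $v_i$ is grey, but the disjointness-from-$U$ conclusion you actually use still holds, so this is a slip of wording rather than a gap.
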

\begin{proof}
All possible SPR operations can be expressed as a set of equivalence classes:
\begin{align*}
\begin{array}{c}
\sprop{B}{r}{B}{\ast},~\sprop{B}{r}{W}{\ast},~\sprop{B}{r}{G}{\ast},~\sprop{B}{b}{\ast}{\ast},~\sprop{W}{b}{\ast}{\ast},\\
\sprop{W}{r}{W}{\ast},~\sprop{W}{r}{G}{\ast},~\sprop{W}{r}{B}{r},~\sprop{W}{r}{B}{b},~\sprop{G}{\ast}{\ast}{\ast}.
\end{array}
\end{align*}
Each of the Lemmata \ref{lem:first} -- \ref{lem:last} below, shows that the SPR operations within a specific equivalence class cannot yield a compatible tree. This leaves us with the operations mentioned in the statement of the theorem.
\end{proof}

\begin{lemma}\label{lem:first}
Operations in $\sprop{B}{r}{W}{\ast}$ or $\sprop{B}{r}{G}{\ast}$ do not reduce the number of maximal black subtrees.
\end{lemma}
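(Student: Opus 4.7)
The plan is to track how the node colours are affected by an SPR operation in each of the two classes, and to show that the operation preserves the maximal black subtree at the pruning node $u_i$ while leaving all other maximal black subtrees untouched.

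First I would examine the pruning step. Since $u_i$ is the root of a maximal black subtree, its sibling in $T_i$, call it $s$, cannot be black (otherwise $\pa(u_i)$ would be black, contradicting maximality). Hence $\pa(u_i)$ is grey (its descendants contain the all-$1$'s of $u_i$'s subtree together with at least one $0$ from $s$), and by induction every ancestor of $u_i$ is grey. When the subtree at $u_i$ is pruned and the binary structure is contracted, the position formerly occupied by $\pa(u_i)$ is inherited by $s$. Because $s$ is non-black, the node now at that position is non-black, and every strict ancestor of this position still has $s$'s $0$-descendants beneath it, so it cannot become black. In particular, no black subtrees elsewhere in the tree are merged, destroyed, or newly created by the pruning step, and every maximal black subtree distinct from the one at $u_i$ remains maximal.

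Next I would handle the regrafting step. In both $\sprop{B}{r}{W}{\ast}$ and $\sprop{B}{r}{G}{\ast}$, a new internal node $p$ is inserted on the branch above $v_i$, with children $u_i$ (black) and $v_i$ (white or grey). Since $v_i$'s subtree already contains at least one $0$, the descendants of $p$ are mixed, so $p$ is grey. In particular, the subtree rooted at $u_i$ now has a non-black parent, so it is still a maximal black subtree. Moreover, every ancestor of $p$ in the regrafted tree still has a $0$-descendant through $v_i$, so no ancestor of $p$ becomes black either, and no black subtree above $v_i$ is merged with the newly grafted one.

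Combining the two steps shows that the original maximal black subtree at $u_i$ persists and that the set of all other maximal black subtrees is left unchanged as a set of rooted subtrees of the new tree. Therefore the total count of maximal black subtrees cannot decrease. The main thing to be careful about is the propagation of colour changes through the ancestors of $\pa(u_i)$ (after contraction) and through the ancestors of $p$ (after regrafting); the content of the argument is that in both cases a $0$-descendant is preserved, so no ancestor can flip to black. Everything else is bookkeeping.
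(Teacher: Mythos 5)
Your proof is correct and follows essentially the same route as the paper's: analyse pruning and regrafting separately, observe that colour changes propagate only to ancestors of the pruning/regrafting position, and show those ancestors cannot become black, so the black subtree at $u_i$ persists and no other maximal black subtree is destroyed or merged. Your bookkeeping via retained $0$-leaf descendants is just a slightly more explicit way of phrasing the paper's observation that the affected ancestors remain grey or turn white, never black.
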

\begin{proof}
The classification of a node as a subtree root can only change if either $1^\circ$ the colours of its descendants change or $2^\circ$ the colour of its sibling changes. Suppose we have two black subtree root nodes, $b_1$ and $b_2$. Let us first consider how the pruning of $b_{1}$ affects the subtree root status of $b_2$. 

The pruning of any node $u$ can only affect the colours of the ancestors of $u$. Due to being a black subtree root, $b_1$ cannot be a descendant of $b_2$ and therefore pruning $b_1$ cannot affect the colours of the descendants of $b_2$. Therefore the subtree root classification of $b_2$ can only change if the pruning of $b_1$ turns the sibling of $b_2$ black. By considering all possible scenarios we see that after pruning $b_1$ the ancestors of $b_1$ either remain grey or turn white and hence the sibling of $b_2$ cannot turn black due to the pruning of $b_1$. In conclusion, pruning $b_1$ will not change the status of $b_2$ as a black subtree root.

Let us then consider the effects of regrafting the subtree rooted at $b_1$. Regrafting can only affect the colours of the ancestors of the regrafting node. Since $b_1$ is regrafted either to a white or a grey node, it cannot be regrafted to a descendant of $b_2$ and therefore the status of $b_2$ can only change if the sibling of $b_2$ is an ancestor of the regrafting node and if it is black after regrafting. Regrafting a black node to a white or a grey node results in all the ancestors of the regrafting node being grey, so the status of $b_2$ as a black root is unchanged. Moreover, $b_1$ remains as a black subtree root, so the resulting tree has exactly the same number of maximal black subtrees as the original tree. \qedhere
\end{proof}

\begin{lemma}
Operations in $\sprop{B}{b}{W}{\ast}$, $\sprop{B}{b}{B}{\ast}$ or $\sprop{B}{b}{G}{\ast}$ when applied to a tree with more than one maximal black subtree cannot produce a compatible tree.
\end{lemma}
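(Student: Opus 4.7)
The plan is to show that pruning a black branch cannot reduce the number of maximal black subtrees below two. Let $b_{1},\ldots,b_{k}$, $k\geq 2$, be the roots of the maximal black subtrees of $T_{i}$, and let $B_{j}$ denote the set of leaves below $b_{j}$. Since $u$ is a black branch, both $u$ and $\pa(u)$ are black, so $u$ is a strict descendant of some maximal black root, say $b_{1}$, and so is the sibling $s$ of $u$. Write $L_{u}$ and $L_{s}$ for the leaves below $u$ and $s$ respectively; these are disjoint non-empty subsets of $B_{1}$, and both consist entirely of black leaves. Since SPR preserves the leaf-set, the leaves of $T'_{i}$ are still coloured by the same data column $D_{j}$ used for $T_{i}$.

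First I would argue that $T'_{i}$ always contains a maximal black subtree whose leaf-set is contained in $B_{1}\setminus L_{u}$. If $\pa(u)\neq b_{1}$, then $b_{1}$ survives in $T'_{i}$ and its leaf-set is exactly $B_{1}\setminus L_{u}$, which is non-empty and all black; its parent $\pa(b_{1})$ is untouched by the SPR and so remains non-black, making $b_{1}$ a maximal black subtree root in $T'_{i}$. If $\pa(u)=b_{1}$, then $b_{1}$ is contracted away when $u$ is pruned and $s$ takes its role; one checks separately the three regraft possibilities (regrafting onto $s$, strictly below $s$, or elsewhere) and in each case finds a black subtree at the position previously occupied by $b_{1}$, whose parent coincides with $\pa(b_{1})$ (up to possible insertion of a new intermediate grey node) and so is not black.

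Next I would show that the remaining maximal black subtrees $B_{2},\ldots,B_{k}$ survive the regraft. The regraft at $v$ only affects the subtree at $v$ and the strict ancestors of $v$; it can modify at most one of the $B_{j}$, $j\geq 2$, namely the unique one containing $v$ if such a $j$ exists. In that case either $v$ is strictly below $b_{j}$, so $b_{j}$ still roots an all-black subtree whose leaves are $B_{j}\cup L_{u}$, or $v=b_{j}$, in which case the newly inserted node $p$ above $b_{j}$ has two black children and parent $\pa(b_{j})$ unchanged and therefore non-black, so $p$ becomes a single maximal black root. If $v$ lies in none of the $B_{j}$, $j\geq 2$, those subtrees are structurally unchanged. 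Combined with the subtree identified in the previous paragraph, this gives at least $k\geq 2$ maximal black subtrees of $T'_{i}$, hence $T'_{i}$ is not compatible with $D_{j}$.

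The main obstacle is the bookkeeping in the corner case $\pa(u)=b_{1}$ with $v=s$, where pruning removes $b_{1}$ and the regraft immediately inserts a new internal node $p$ on the branch above $s$; one must verify that $p$, which has leaf-set $L_{u}\cup L_{s}=B_{1}$, sits directly below $\pa(b_{1})$, and that $\pa(b_{1})$ remains non-black because its other child's subtree is untouched by the SPR. A closely related subtlety is ruling out an ``accidental'' merge of two distinct $B_{j}$ via the regraft, which cannot occur since the regraft inserts a single new internal node and the $\pa(b_{j})$ are pairwise unrelated to $v$ for all but at most one index $j$. Once these cases are handled, the rest of the argument reduces to a direct inspection of where $v$ lies relative to $b_{1},\ldots,b_{k}$.
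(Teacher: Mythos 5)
Your proof uses the same underlying idea as the paper's: show that the number of maximal black subtrees can never drop below the original $k\ge 2$. The paper organises the count by the colour/classification of the regraft target (white or grey targets add a maximal black subtree; black-branch and black-root targets leave the count unchanged), whereas you organise it region by region, tracking each original maximal black subtree separately; the two bookkeepings are interchangeable and your accounting, including the disjointness of the regions, is sound in substance. One sub-case in your first paragraph is, however, stated incorrectly: for $\pa(u)\neq b_1$ you assert that $b_1$ survives as a maximal black root with leaf-set exactly $B_1\setminus L_u$ and that $\pa(b_1)$ is untouched, but this fails when the regraft node $v$ lies at or below $b_1$. For instance, an operation in $\sprop{B}{b}{B}{r}$ with $v=b_1$ inserts a new black node $p$ (children $u$ and $b_1$) which becomes the maximal root while $b_1$ is demoted to a branch node, and for $v$ strictly below $b_1$ the leaf-set of $b_1$'s subtree is again $B_1$, not $B_1\setminus L_u$. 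The weaker conclusion you actually need --- that the region of $b_1$ still hosts one maximal black subtree, disjoint from those of $b_2,\ldots,b_k$ --- does hold, by exactly the split ($v$ strictly below the root versus $v$ equal to the root) that you already carry out for $j\ge 2$, so the bound of at least $k\ge 2$ maximal black subtrees survives once this sub-case is folded in. It would also be worth one explicit sentence that the pruning itself cannot turn any grey ancestor of $u$ black (a grey node has a white leaf below it, and pruning removes only black leaves), since that is what guarantees $b_2,\ldots,b_k$ and their non-black parents are unaffected by the prune as well as by the regraft; the paper asserts this step without proof as well.
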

\begin{proof}
We only need to consider regrafting, because after the pruning, the number of maximal black subtrees remains unchanged. From the proof of Lemma \ref{lem:first} we know that regrafting a black node to a white or a grey node cannot change the status of the existing black subtree roots but it introduces a new maximal black subtree. So the resulting tree cannot be compatible. 

Regrafting a black node to a black branch node means inserting a black subtree into a black subtree whose root node remains unchanged. Regrafting a black node to a black subtree root, say $b$, implies that the new node introduced by the regraft operation becomes a new black subtree root and the classification of $b$ changes from subtree root to branch. In any case, the number of black subtree root nodes remains unchanged.
\end{proof}

\begin{lemma}\label{lem:wrbb}
Operations in $\sprop{W}{r}{B}{b}$ cannot produce a compatible tree.
\end{lemma}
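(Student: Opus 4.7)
My plan is to show that any SPR operation in $\sprop{W}{r}{B}{b}$ leaves at least two distinct maximal black subtrees in the resulting tree, so the output cannot be compatible with the data at the next segregating site.

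First I would examine the local effect of the regraft on the colouring. The operation inserts a new node $v'$ whose children are $u$ (a maximal white subtree root, hence a white node) and $v$ (black); by the colour-inheritance rule $v'$ is therefore grey. Consequently, in the new tree, $v$ has a grey parent and is itself a root of a maximal black subtree.

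Next I would exhibit a second such root. Let $p=\pa(v)$ in the original tree and let $s_v$ denote the sibling of $v$. Because $v$ is a black branch node, $p$ is black, so both children of $p$ are black; in particular $s_v$ is black. The main step, and the real obstacle, is to verify that the subtree rooted at $s_v$ is entirely untouched by the SPR. I would establish this using colour inheritance: since $u$ is white it has at least one white leaf descendant, while every leaf beneath a black node (in particular beneath $s_v$ or beneath $p$) must be black; so $u$ is not a descendant of $p$ or of $s_v$. Conversely, all strict ancestors of $u$ are grey (if any were black they would, as ancestors of the white node $u$, have a white leaf descendant, a contradiction), so $p$ and $s_v$, being black, cannot lie on the ancestor chain of $u$ either. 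Hence pruning $u$ modifies only nodes on the ancestor chain of $u$, which is disjoint from the subtree at $s_v$, and the regraft at $v$ alters the structure only at and above $v$; in both steps the subtree at $s_v$ is preserved together with its colouring.

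After the regraft, the children of $p$ in the new tree are $v'$ (grey) and $s_v$ (still black), so $p$ becomes grey (and if $p$ has no parent, being grey is all that matters). Thus $s_v$ has a grey parent in the new tree and is the root of a maximal black subtree, manifestly distinct from the one rooted at $v$ since $v$ and $s_v$ lie in the disjoint subtrees below $p$. The presence of two distinct maximal black subtrees contradicts the compatibility condition, completing the argument. Once the disjointness claim of the previous paragraph is established the conclusion is immediate.
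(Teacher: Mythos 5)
Your proof is correct and follows essentially the same route as the paper: after regrafting the white subtree onto the black branch node, that node and its original sibling both become roots of maximal black subtrees, so the resulting tree cannot be compatible. Your version merely spells out the supporting details (the new node being grey, and the pruning/regrafting not disturbing the sibling's subtree) that the paper's terser proof leaves implicit.
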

\begin{proof}
After the regrafting, the black branch node, say $b$, to which the white node was regrafted becomes a black subtree root. Also the sibling of $b$ becomes a black subtree root. Hence the resulting tree has at least two maximal black subtrees and cannot be compatible.
\end{proof}

\begin{lemma}
Operations in $\sprop{W}{b}{B}{\ast}$, $\sprop{W}{b}{W}{\ast}$ or $\sprop{W}{b}{G}{\ast}$ when applied to a tree with more than one maximal black subtrees cannot produce a compatible tree.
\end{lemma}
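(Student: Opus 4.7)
The plan is to mirror the strategy used in Lemmata \ref{lem:first}--\ref{lem:wrbb}: analyse separately the effect of pruning and then of regrafting on the set of maximal black subtrees, and show that in each of the three subclasses the count of maximal black subtrees does not decrease. Since the starting tree has more than one maximal black subtree, the resulting tree will also have more than one, hence be incompatible.

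First I would handle the pruning. The pruned node $u$ is a \emph{white branch} node, i.e.\ $u$ is white but $\pa(u)$ is not a maximal white root; since $u$ is white, the only way $\pa(u)$ fails to be a maximal white root is if it is itself white, which forces the sibling $s$ of $u$ to be white as well. After pruning, $s$ takes the place of $\pa(u)$ as a child of $\pa(\pa(u))$; because $s$ and $\pa(u)$ share the same colour, the colours of all strict ancestors of $u$ are unchanged. Consequently, pruning a white branch node affects no node colour outside the detached subtree, so it preserves both the set and the classification (root/branch) of every maximal black subtree.

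Next I would analyse regrafting, case by case, using the observation that the newly inserted node $n$ sits between $\pa(v)$ and $v$ with children the white pruned subtree and $v$. In $\sprop{W}{b}{W}{\ast}$ the new node has two white children and is white; it replaces a white $v$ as a child of $\pa(v)$, so no ancestor colour changes, and the maximal black subtrees are unchanged. In $\sprop{W}{b}{G}{\ast}$, $n$ has white and grey children and is therefore grey, replacing a grey $v$; again the colours above are unchanged. In $\sprop{W}{b}{B}{\ast}$ the new node has white and black children, hence is grey. If $v$ is a black root ($\sprop{W}{b}{B}{r}$), then $\pa(v)$ was already non-black (otherwise $v$ would not be maximal), so $\pa(v)$'s children list changes black$\to$grey on the $v$-side but its overall colour stays grey; the subtree rooted at $v$ is still a maximal black subtree, and no other black roots are created or destroyed. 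If $v$ is a black branch node ($\sprop{W}{b}{B}{b}$), then $\pa(v)$ was black; after the regraft $\pa(v)$ gains a grey child and becomes grey, and this ``greying'' propagates up the ancestor chain to the root of the original maximal black subtree containing $v$. The effect is to split that one maximal black subtree into at least two: the subtree rooted at $v$ and the subtree rooted at the sibling of $v$, both of which are still black.

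Combining the two halves, in every subclass the number of maximal black subtrees is $\geq$ the number before the operation, so starting with more than one yields more than one afterwards, contradicting compatibility. The only slightly delicate step is case $\sprop{W}{b}{B}{b}$, where one must check that $v$ and its new sibling really remain black-rooted after $\pa(v)$ turns grey; the rest of the argument is a routine colour bookkeeping via the parent-from-children rule.
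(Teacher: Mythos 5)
Your proof is correct and follows essentially the same route as the paper's: you show that pruning a white branch node changes no colour outside the detached (all-white) subtree, and then run the same regraft case analysis -- white target, grey target, black root, black branch -- with the black-branch case argued exactly as in Lemma \ref{lem:wrbb}, concluding in every case that at least two maximal black subtrees remain. One minor slip worth fixing: a white \emph{branch} node $u$ is by definition one that is not itself a maximal white root (equivalently, $\pa(u)$ is white), not one whose parent fails to be a maximal white root; the facts you actually use ($\pa(u)$ and the sibling of $u$ are white) are exactly what the correct definition yields, so the argument stands as written.
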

\begin{proof}
Pruning a white branch node does not affect the number of maximal black subtrees so we only need to consider regrafting. As in Lemma \ref{lem:wrbb}, regrafting to a black branch node will cause the regrafting node and its sibling to become black subtree roots and thus the tree will not be compatible.

When regrafting to a black subtree root, the black subtree root remains unchanged and all its ancestors become grey. Hence the classification of any black subtree root remains unchanged as a change would require its sibling to turn black which cannot be the case.

Regrafting a white node to a white branch node will have no consequences outside the white subtree containing the regrafting node. Regrafting to a white subtree root causes the new node introduced by the regraft operation to become a new white subtree root instead of the regrafting node, but the rest of the tree will remain unchanged. 

When regrafting to a grey node, only the ancestors of the regrafting node will be affected, but because the ancestors of a grey node are grey, they will remain unchanged. \qedhere

\end{proof}

\begin{lemma}\label{lem:last}
Operations in $\sprop{G}{\ast}{\ast}{\ast}$ when applied to a tree with more than one maximal black subtree cannot produce a compatible tree.
\end{lemma}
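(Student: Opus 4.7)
The plan is to exploit the fact that a grey pruning node $u$ carries with it a subtree $T_u$ whose internal colouring is unchanged by the SPR, combined with the observation that regrafting any grey subtree inevitably paints the entire path above the regrafting site grey, preventing any merger of black leaves from $T_u$ with black leaves outside it.

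First I would record two elementary observations that drive the whole argument. Since $u$ is grey, $T_u$ must contain at least one black leaf and at least one white leaf (otherwise $u$'s colour would be the common colour of its leaf descendants, not grey). Next, in the post-SPR tree $T'$, the topology and leaf labelling of $T_u$ are identical to those in $T_i$, and $u$ acquires a new parent $P$ whose other child is the regrafting node $v$. Because one of $P$'s children (namely $u$) is grey, the recursive colouring rule forces $P$ itself to be grey; and by induction along the path to the root, every ancestor of $P$ in $T'$ has a grey descendant $P$ and is therefore grey as well.

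I would then split into two cases according to where the black leaves lie. In Case A, all black leaves of $T_i$ sit inside $T_u$. The hypothesis that $T_i$ has more than one maximal black subtree forces $T_u$ to already contain at least two such subtrees. Since the internal topology and the colours of the inner nodes of $T_u$ are preserved by the SPR, each maximal-black-subtree root inside $T_u$ keeps the same (non-black) parent within $T_u$ in $T'$, so $T'$ still carries at least two maximal black subtrees. In Case B there is at least one black leaf $\ell_2$ outside $T_u$. Pairing it with any black leaf $\ell_1 \in T_u$ (available by the first observation), one checks that any common ancestor of $\ell_1$ and $\ell_2$ in $T'$ must lie at $P$ or strictly above $P$: every ancestor of $\ell_1$ inside $T_u$ has no descendants outside $T_u$, so cannot be an ancestor of $\ell_2$, and $u$ itself is not an ancestor of $\ell_2$ either. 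Thus the LCA of $\ell_1$ and $\ell_2$ is grey, and they lie in distinct maximal black subtrees of $T'$, giving at least two.

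The only point calling for a bit of care, rather than a genuine obstacle, is in Case A, where one must confirm that the role of each maximal-black-subtree root within $T_u$ is unaffected by the move; this is immediate once one notes that such a root cannot be $u$ itself (as $u$ is grey) and that all other inner nodes of $T_u$, together with their parents, retain their colours. Together, Cases A and B show that $T'$ always contains at least two maximal black subtrees, so no operation in $\sprop{G}{\ast}{\ast}{\ast}$ can produce a compatible tree.
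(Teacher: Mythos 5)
Your proof is correct, and it reorganises the argument differently from the paper. The paper's proof splits according to the colour of the regraft target: regrafting the grey subtree onto a black node immediately yields a maximal black subtree rooted at the regraft node plus one inside the pruned subtree, while for white or grey targets it uses a counting argument (the result has at least one more maximal black subtree than the tree after pruning, with the hypothesis of more than one maximal black subtree invoked only when the pruned subtree carries all the black leaves). You instead split on where the black leaves lie: if some black leaf remains outside $T_u$, your observation that the new parent $P$ and all of its ancestors are grey forces the least common ancestor of that leaf and any black leaf inside $T_u$ to be grey, so the two leaves sit in distinct maximal black subtrees --- an argument that treats black, white and grey regraft targets uniformly and does not even need the hypothesis; if all black leaves lie in $T_u$, the hypothesis places two maximal black subtree roots inside $T_u$, and colour preservation within the pruned subtree keeps them maximal after the move. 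Both proofs rest on the same basic facts (a grey node has both black and white leaf descendants, the path above the regrafted subtree is entirely grey, and colours inside $T_u$ are unchanged), so the difference is one of decomposition rather than substance; your LCA formulation buys a single uniform case for all regraft targets, whereas the paper's target-colour split makes the bookkeeping of ``before versus after pruning'' counts more explicit.
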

\begin{proof}

Regrafting a grey node to a black node (subtree root or a branch) results in a tree containing at least two maximal black subtrees: one rooted at the regrafting node and another one must be contained by definition in the subtree rooted at the pruned grey node.

Regrafting to a white or a grey node will cause all the ancestors of the regrafting node to become grey. This means that the classification of each black subtree root in the tree must remain unchanged, but since the subtree being regrafted contains, by definition, at least one maximal black subtree, the resulting tree must have at least one more maximal black subtree than the tree after pruning. If the tree after pruning contained at least one maximal black subtree, then the resulting tree would be incompatible and if the tree after pruning did not contain any maximal black subtrees, then the pruned subtree must contain at least two maximal black subtrees, since the original tree was assumed to have at least two maximal black subtrees. In any case, the resulting tree will be incompatible.
\end{proof}

\end{appendices}

\end{document}